\newcommand{\ket}[1]{\left \vert #1 \right \rangle}
\newcommand{\bra}[1]{\left \langle #1 \right \vert}
\newcommand{\id}{\mathbbm{1}}
\newcommand{\cc}{\mathbbm{C}}
\DeclareMathOperator{\polylog}{polylog}
\DeclareMathOperator{\poly}{poly}
\newcommand{\ketbra}[2]{\left \vert #1 \right \rangle \! \!\left \langle #2 \right \vert}
\newcommand{\braket}[2]{\left \langle #1 \vphantom{#2} \right \vert \left. #2 \vphantom{#1} \right \rangle}
\newcommand{\be}{\begin{equation}}
\newcommand{\ee}{\end{equation}}
\newcommand{\T}{\mathcal T}
\newcommand{\U}{\mathcal U}
\newtheorem{theorem}{Theorem}
\newtheorem{lemma}{Lemma}
\newtheorem{corollary}{Corollary}
\newtheorem{proposition}{Proposition}
\theoremstyle{definition}
\DeclareMathOperator{\vspan}{span}
\DeclareMathOperator{\arccosh}{arccosh}
\begin{document}

\title{Faster ground state preparation and high-precision ground energy estimation with fewer qubits}

\author{Yimin Ge}
\author{Jordi Tura}
\author{J. Ignacio Cirac}
\affiliation{Max-Planck-Institut  f{\"u}r Quantenoptik, D-85748 Garching, Germany}

\begin{abstract}
We propose a general-purpose quantum algorithm for preparing ground states of quantum Hamiltonians from a given trial state. The algorithm is based on techniques recently developed in the context of solving the quantum linear systems problem \cite{ChildsKothariSomma15}. We show that, compared to algorithms  based on phase estimation, the runtime of our algorithm is exponentially better as a function of the allowed error, and at least quadratically better as a function of the overlap with the trial state.  We also show that our algorithm requires fewer ancilla qubits than existing algorithms,  making it attractive for early applications of small quantum computers. Additionally, it can be used to determine an unknown ground energy faster than with phase estimation if a very high precision is required. 
\end{abstract}
\maketitle

\section{Introduction}

Quantum computers are expected to have a deep impact in the simulation of large quantum systems, as originally envisioned by Feynman \cite{Feynman1982}. Of particular interest is the potential ability to study both, the dynamics  and low energy properties of many-body quantum systems, which are usually inaccessible classically due to the exponential dimension of the underlying Hilbert space. Quantum computers do not suffer from this representability problem, as one can store states in a number of qubits that only scales logarithmically with that dimension. This fact can be used to develop
very efficient algorithms to simulate the dynamics of quantum systems \cite{Lloyd96,BCCKS15,BerryChildsKothari15,LowChuangSP,LowChuang16}. However, preparing certain physically relevant states, like the ground state of a many-body Hamiltonian,  may  be significantly more difficult. This can be seen as a consequence of ground state preparation likely being hard in full generality, as indeed many variations of ground state energy problems have been proven to be complete for the class QMA \cite{QuantumHamiltonianComplexity}. Nevertheless, preparing ground states of Hamiltonians has profound applications in several fields of science so that more efficient quantum algorithms than the ones existing \cite{Kitaev95,PoulinWocjan09,AbramsLloyd99} are highly desired. This could allow one, for instance, to prepare the initial states that are required to simulate quenches in quantum many-body systems, thus enabling the study of many intriguing and not fully understood phenomena, such as many-body localisation \cite{ANDP:ANDP201700169} or the presence of thermalisation in closed systems \cite{RevModPhys.83.863,*doi:10.1080/00018732.2016.1198134}, with quantum comptuters. Other applications include single-copy tomography \cite{PhysRevLett.105.190503} and the construction of QMA witnesses \cite{PoulinWocjan09}. Furthermore, the ability to determine the ground energy of a Hamiltonian to a high precision also possesses many applications in the fields of physics and quantum chemistry \cite{QChemIntro}, and possibly even in quantum machine learning \cite{Biamonte2017}.

Most existing quantum algorithms for ground state preparation are based on one of two methods. First, one could naively attempt to project a trial state $\ket\phi$ onto the ground state by measuring the energy of $\ket\phi$ using the phase estimation algorithm \cite{Kitaev95}. 
The probabily of success is proportional to $|\phi_0|^2$, where $\phi_0$ is the overlap of $\ket\phi$ with the ground state.  
Furthermore, straightforward application of phase estimation becomes expensive if a very high fidelity of the prepared state with the real ground state is required. 
A second class of algorithms is based on variants of the adiabatic algorithm \cite{Farhi2000}. Here, the target Hamiltonian $H(1)$ is connected to a trivial Hamiltonian $H(0)$ via a path $H(s)$, which is slowly changed from $H(0)$ to $H(1)$. The adiabatic theorem guarantees that if the initial state is  the ground state of $H(0)$, which is assumed to be easily prepared, then for sufficiently long runtimes, the final state will be close to the ground state of $H(1)$. Rigorous bounds \cite{Jansen07} on the runtime however depend inverse polynomially on the minimum spectral gap along the entire path $H(s)$, which is generally exponentially small and moreover extremely difficult to calculate or bound in practice. Thus, adiabatic algorithms are often only employed as \emph{heuristic} methods to first obtain a state with (hopefully) non-trivial overlap with the ground state, which can then subsequently be used as the trial state in phase estimation \cite{Oh08}. This approach is expected to work significantly better than just using random trial states and is the current paradigm e.g. for quantum chemistry applications \cite{QChemIntro}.

In this paper, we propose a quantum algorithm that significantly improves the part played by phase estimation in this approach. More generally, we consider the problem of preparing a good approximation of the ground state from a given trial state. 
We show that compared to using phase estimation, the runtime of our ground state preparation algorithm scales exponentially better in the allowed error to the real ground state, and polynomially better with the spectral gap and the  overlap of the trial state with the ground state. We also show that, in case the ground energy is not known beforehand, the same algorithm can be used to obtain a good estimate of the ground energy to a high precision faster than is possible with phase estimation.

Unlike algorithms based on the adiabatic theorem, whose runtimes always depend on the minimum spectral gap along an entire path of Hamiltonians, all algorithms analysed in this paper only require a lower bound on the spectral gap of the target Hamiltonian. This is a significantly weaker assumption and indeed, for many systems of interest such as in typical critical points, this gap is known to scale only inverse polynomially with the system size.

The outline of the remainder of the paper is as follows. In Section~\ref{sec:overview}, we give an overview of the results and a high-level overview of the ideas. In Section~\ref{sec:known}, we give the technical details of the algorithm in case the ground energy is known beforehand. In Section~\ref{sec:unknown}, we present the technical details of the algorithm for both, ground state preparation and high-precision ground energy estimation, in case the ground energy is unknown beforehand. We close the main part of the paper with some concluding remarks and open questions in Section~\ref{sec:conclusion}. 
Appendix~\ref{app:PEAenergy} analyses the cost of finding the ground energy with phase estimation. 
In Appendix~\ref{app:PEApreparation}, we demonstrate that if phase estimation is used for ground state preparation, an extremely  precise estimate of the ground energy is required beforehand, and analyse the cost of doing so. In Appendix~\ref{app:PoulinWocjan}, we analyse the filtering algorithm from \cite{PoulinWocjan09}. Finally, in Appendix~\ref{app:Chebyshev}, we sketch an alternative approach (inspired by  the ``Chebyshev method'' of \cite{ChildsKothariSomma15}) to the problem.

\section{Overview of results}\label{sec:overview}

Throughout this paper, let $\tilde H$ be an  $N\times N$ Hermitian matrix  
such that its spectrum is contained in $[0,1]$. 
We assume that we are given the ability to efficiently perform Hamiltonian simulation of $\tilde H$. More precisely, we require that $e^{\pm i\tilde H}$ can be approximated to error $\epsilon'$ using $O(\Lambda t\polylog(N,1/\epsilon'))$ elementary gates\footnote{Examples of such algorithms include \cite{LowChuang16,BerryChildsKothari15,BCCKS15,LowChuangSP}. Notice that algorithms based on Trotter product formulas such as \cite{Lloyd96} do not meet this requirement.}, where $\Lambda$ is the ``base cost'' of the simulation (e.g., if the simulation algorithm works in the oracle model \cite{BerryChildsKothari15}, $\Lambda$ is the gate cost of the oracles). 
Let $\lambda_0$ be the lowest eigenvalue of $\tilde H$, and $\ket{\lambda_0}$ be the corresponding eigenstate. For simplicity of notation, we will assume that $\lambda_0$ is non-degenerate (all results in this paper trivially generalise to the case when $\lambda_0$ is degenerate, see Section~\ref{sec:conclusion}). Suppose that $\Delta$ is a known lower bound on the spectral gap of $\tilde H$.

Suppose moreover that we are given a circuit $\mathcal C_\phi$ using $\Phi$ elementary gates that prepares a trial state $\ket\phi$. Let $\phi_0:=\braket{\lambda_0}\phi$ be its (generally unknown) overlap with the ground state, and $\chi$ be a known lower bound on $|\phi_0|$. We will throughout this paper assume that $\chi = e^{-O(\log N)}$. Notice that this is an extremely weak assumption, indeed, this is satisfied even for random states with high probability.  
The aim of this paper is to prepare a state $\epsilon$-close to $\ket{\lambda_0}$ by (approximately) projecting $\ket\phi$ onto its ground state component. 

Throughout this paper, we will use the computer science convention for the big-$O$ notation. We furthermore use $\tilde O$ to denote the complexity up to polylogarithmic factors in $N$, $\Delta^{-1}$, $\epsilon^{-1}$, $|\phi_0|^{-1}$ and $\chi^{-1}$. 
Our first result can now be stated as follows. 

\begin{theorem}[Ground state preparation for known ground energy] \label{thm:GSEknown}
 Suppose that $\lambda_0$ is known to a precision of $O(\Delta/\log \frac1{\chi \epsilon})$. Then, an $\epsilon$-close state to $\ket{\lambda_0}$ can be prepared with constant probability in a gate complexity of 
 \begin{equation} \label{eq:gatesknown}
\tilde O\left(\frac\Lambda{|\phi_0| \Delta}+\frac {\Phi}{|\phi_0|}\right),
\end{equation}
and using
\begin{equation} \label{eq:qubits}
	O\left(\log N+\log\log\frac1{\epsilon}+\log \frac1\Delta\right)
\end{equation}
qubits. Moreover, a flag qubit indicates success. 
\end{theorem}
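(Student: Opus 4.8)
The plan is to approximately apply the spectral projector $\ketbra{\lambda_0}{\lambda_0}$ to the trial vector $\ket\phi=\sum_j\phi_j\ket{\lambda_j}$ and then use amplitude amplification to extract the normalised result. I would realise this projector as a filter function $f(\tilde H)$, where $f$ is chosen so that $f(\lambda_0)\approx 1$ while $|f(x)|\le\delta$ for every $x\ge\lambda_0+\Delta$, with $\delta$ a small parameter fixed below. Applying $f(\tilde H)$ gives $f(\tilde H)\ket\phi=\phi_0 f(\lambda_0)\ket{\lambda_0}+\sum_{j\ge1}\phi_j f(\lambda_j)\ket{\lambda_j}$, whose norm is dominated by the ground-state term (amplitude $\approx|\phi_0|$) while the leakage into excited states has norm $\le\delta$; the renormalised output is then at distance $\sim\delta/|\phi_0|$ from $\ket{\lambda_0}$. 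Choosing $\delta\lesssim\epsilon|\phi_0|$ makes this $\epsilon$-close, and since $|\phi_0|\ge\chi$ it suffices to take $\delta\sim\epsilon\chi$.

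A concrete and analytically convenient filter is the exponential (imaginary-time) profile $f(x)=e^{-\beta(x-\mu)}$, where $\mu$ is the given estimate of $\lambda_0$ and $\beta=\Theta\!\big(\tfrac1\Delta\log\tfrac1{\epsilon\chi}\big)$: on the spectrum $[\lambda_0,1]$ it is maximal at $\lambda_0$ and suppresses every excited eigenvalue by at least $e^{-\beta\Delta}\le\delta$ relative to the ground state. Following the linear-combination-of-unitaries (LCU) method of \cite{ChildsKothariSomma15}, I would implement $f(\tilde H)$ from the simulation primitive $e^{-i\tilde H t}$ by writing
\[
  f(\tilde H)=\int \hat f(t)\,e^{-i\tilde H t}\,\dd t,
\]
truncating to $|t|\le T$ and discretising into a finite sum $\sum_k\alpha_k e^{-i\tilde H t_k}$ with $\sum_k|\alpha_k|=O(1)$ and only $M=\poly(\tfrac1\Delta,\log\tfrac1{\epsilon\chi})$ terms. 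This is applied by the standard SELECT/PREPARE construction: PREPARE loads the $\alpha_k$ onto a small ancilla register, SELECT performs the conditional evolutions $e^{-i\tilde H t_k}$, and PREPARE$^\dagger$ uncomputes, after which the all-zero ancilla outcome flags a successful (subnormalised) application of $f(\tilde H)$ to $\ket\phi$.

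The sharpness required to suppress the leakage to $\delta\sim\epsilon\chi$ across a window of width $\Delta$ forces a maximal evolution time $T=\tilde O(1/\Delta)$, the $\log\tfrac1{\epsilon\chi}$ being hidden in $\tilde O$; so each $e^{-i\tilde H t_k}$ costs $O(\Lambda T\polylog)=\tilde O(\Lambda/\Delta)$ gates and the whole LCU circuit uses one invocation of $\mathcal C_\phi$ ($\Phi$ gates). A single run heralds success with probability $\approx|\phi_0|^2$; since the subnormalisation is $\sum_k|\alpha_k|=O(1)$, only $O(1/|\phi_0|)$ rounds of amplitude amplification boost this to a constant, yielding \eqref{eq:gatesknown}. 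For the qubit count, the system needs $\log N$ qubits, the PREPARE register needs $\log M=O(\log\tfrac1\Delta+\log\log\tfrac1\epsilon)$ qubits, and the simulation routine's own ancillas are polylogarithmic and absorbed; this reproduces \eqref{eq:qubits}, and the merely $\log\log\tfrac1\epsilon$ (rather than $\log\tfrac1\epsilon$) dependence is exactly the exponential qubit saving over phase estimation.

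The main obstacle I expect is the filter construction and its error analysis, which must meet two competing requirements at once: $f$ must fall from $\approx 1$ to $\lesssim\epsilon\chi$ across an interval of width only $\Delta$, yet admit a Fourier representation concentrated on $|t|\le T=\tilde O(1/\Delta)$ with bounded coefficient sum $\sum_k|\alpha_k|=O(1)$, so that the truncation and discretisation errors stay below $\epsilon$ without penalising the amplitude amplification. Bounding these errors is the bulk of the technical work. The precision hypothesis on $\lambda_0$ enters here: miscentring by $p=|\mu-\lambda_0|$ rescales $f$ on the spectrum by $e^{\pm\beta p}$, so keeping both $f(\lambda_0)=\Omega(1)$ (needed for the success amplitude to remain $\Omega(|\phi_0|)$) and $\|f\|_{[\lambda_0,1]}=O(1)$ (needed for $\sum_k|\alpha_k|=O(1)$) requires $\beta p=O(1)$, i.e.\ $p=O(\Delta/\log\tfrac1{\chi\epsilon})$, which is exactly the stated accuracy.
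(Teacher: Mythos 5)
Your overall architecture is exactly the paper's: shift the Hamiltonian using the known energy estimate, apply a spectral filter as a linear combination of Hamiltonian-simulation unitaries via PREPARE/SELECT, and boost with amplitude amplification; your miscentring analysis ($\beta p = O(1)$) even correctly recovers the required precision $O(\Delta/\log\frac1{\chi\epsilon})$, playing the role of the paper's offset $\tau$. The genuine gap is your choice of filter. The claim that $e^{-\beta(x-\mu)}$ admits a Fourier representation concentrated on $|t|\le T=\tilde O(1/\Delta)$ with $\sum_k|\alpha_k|=O(1)$ and truncation error $O(\epsilon\chi)$ is false, and you have deferred precisely this point to ``the bulk of the technical work''. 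The one-sided exponential must be windowed or symmetrised to have a Fourier transform at all; the most favourable version, $e^{-\beta|x-\mu|}$, has the Lorentzian transform $\frac{2\beta}{\beta^2+t^2}$, which is non-negative but has only $1/t^2$ tails because the function has a kink at its maximum --- and that maximum is exactly where the ground state sits, so the singularity cannot be pushed into an irrelevant spectral region. Truncating at $|t|\le T$ leaves a uniform error of order $\beta/T$ (the tail mass), and even exploiting oscillatory cancellation at the excited energies the generic boundary term is of size $\sim\beta/(\Delta T^2)$ there; driving either below the required $\delta\sim\epsilon\chi$ forces $T$ \emph{polynomial} in $1/(\epsilon\chi)$ (e.g.\ $T=\Omega(\beta/(\epsilon\chi))$, or $T=\tilde\Omega(\Delta^{-1}(\epsilon\chi)^{-1/2})$ at best), not polylogarithmic. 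Since the gate cost scales with $T$, your algorithm as specified would cost $\tilde O\bigl(\Lambda/(|\phi_0|\Delta\cdot\poly(\epsilon\chi))\bigr)$, losing exactly the exponential-in-$\epsilon$ improvement that Theorem~\ref{thm:GSEknown} asserts. A one-sided cutoff is worse still (a jump gives $1/t$ tails and a logarithmically divergent coefficient integral).

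The paper sidesteps this by taking the filter to be $\cos^M H$ with $H=\tilde H-(E-\tau)\id$, $\tau=\Theta(\Delta/\log\frac1{\chi\epsilon})$ and $M=\Theta(\Delta^{-2}\log^2\frac1{\chi\epsilon})$: this is a Gaussian-\emph{shaped trigonometric polynomial} with the exact, finite expansion $\cos^{2m}x = 2^{-2m}\sum_k\binom{2m}{m+k}e^{2ikx}$, whose weights are non-negative and sum to $1$, and whose concentration on $|k|\le m_0=\Theta(\sqrt{M\log\frac1{\chi\epsilon}})=\tilde O(1/\Delta)$ follows from the Chernoff bound with truncation error $O(\chi\epsilon)$. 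Your plan becomes correct if you simply replace the exponential profile by a Gaussian-type one, $e^{-M(x-\mu)^2/2}$, or by $\cos^M$ itself: sub-Gaussian Fourier decay is what simultaneously delivers $\sum_k|\alpha_k|=O(1)$ and $T=\tilde O(1/\Delta)$. (Note this is tight in the sense you implicitly rely on: any such sum is bounded on all of $\mathbb{R}$ by $\sum_k|\alpha_k|$ and is of exponential type $T$, so a drop from $\Omega(1)$ to $\epsilon\chi$ across a window of width $\Delta$ needs $T=\Omega(\Delta^{-1}\log\frac1{\epsilon\chi})$; the Gaussian meets this up to a $\log^{1/2}$ factor, the kinked exponential cannot.) A secondary, minor omission: amplitude amplification with unknown $|\phi_0|$ needs a small extra device --- the paper runs fixed-point search with guesses $\chi'=1,\frac12,\frac14,\ldots$ at an $O(\log\frac1{|\phi_0|})$ multiplicative overhead --- which your ``$O(1/|\phi_0|)$ rounds'' glosses over, though standard techniques cover it.
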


\begin{table*}[ht]
 \begin{tabular}{l|c|c|c}
 Preparation (ground energy known) & Gates & Qubits& Required precision  \\
 \hline
 This paper 
 & $\displaystyle \tilde O\left(\frac\Lambda{|\phi_0|\Delta}+\frac{\Phi}{|\phi_0|}\right)$ & 
 {$\displaystyle O\left(\log N+\log\log\frac1{\epsilon}+\log\frac1\Delta \right)$}& 
	$\tilde O(\Delta)$ \\
 Phase estimation + amp. amplif. & $\displaystyle\tilde O\left(\frac\Lambda{|\phi_0|^2\Delta\epsilon}+\frac{\Phi}{|\phi_0|}\right)$ & $\displaystyle O\left(\log N+\log \frac1{\epsilon}+\log\frac1\Delta\right)$ & $O\left({|\phi_0|\epsilon\Delta}\right)$\\
Filtering (Poulin  \& Wocjan) 
 & $\displaystyle\tilde O\left(\frac\Lambda{|\phi_0|\Delta}+\frac{\Phi}{|\phi_0|}\right)$ & $\displaystyle O\left(\log N+\log \frac1{\epsilon}+\frac{\log\frac1{\chi\epsilon}}{\log\log\frac{1}{\chi\epsilon}}\times\log\frac 1\Delta\right)$ & 
$\tilde O(\Delta)$  
\end{tabular}
\caption{Algorithms for ground state preparation for the case when the ground energy is known beforehand to the required precision.} \label{tab:prepareCostknown}
\end{table*}

Although other quantum algorithms for this or similar purposes have previously been proposed \cite{Kitaev95,PoulinWocjan09,AbramsLloyd99}, to the best of our knowledge, the algorithm in this paper, for the case of known ground  energy, exhibits  the best scaling for both the runtime and the number of qubits amongst all existing algorithms 
so far (see Table~\ref{tab:prepareCostknown}). 
For example, the common approach of combining phase estimation with amplitude amplification \cite{BHMT02} has a runtime that is exponentially worse in $\epsilon$,  and moreover quadratically worse in $|\phi_0|$  (see Appendix~\ref{app:PEApreparation}). In fact, an inverse polynomial dependence on $\epsilon$ is common to almost all algorithms that are based on phase estimation \cite{Kitaev95,AbramsLloyd99}. To the best of our knowledge, the only exception is a filtering method proposed by Poulin and Wocjan \cite{PoulinWocjan09}, which was originally designed to quadratically improve the runtime dependence on $|\phi_0|$ to obtain a state with low expected energy, and which, as we prove in Appendix~\ref{app:PoulinWocjan}, can also be used to obtain the ground state with a runtime scaling that is  polylogarithmic in $\epsilon^{-1}$ with a suitable choice of parameters\footnote{The filtering method can also be formulated as a majority voting scheme \cite{ChildsKothariSomma15,KothariCorrespondence}.}. 
 This however comes at the cost of requiring significantly more ancilla qubits, which makes it challenging for early applications of small quantum computers. 
 
\begin{table*}[ht]
\qquad(a) \hfill 
 \begin{tabular}[t]{l|c|c}
Preparation (ground energy unknown)& Gates & Qubits \\
 \hline
  This paper & $\displaystyle \tilde O\left(\frac\Lambda{\chi\Delta^{3/2}}+\frac{\Phi}{\chi\sqrt\Delta}\right)$ & 
 $\displaystyle O\left(\log N+\log\log\frac1{\epsilon}+\log\frac1\Delta \right)$ \\
 Phase estimation + min. label finding &  $\displaystyle\tilde O\left(\frac\Lambda{\chi^4\Delta\epsilon}+\frac{\Phi}{\chi}\right)$  & $\displaystyle O\left(\log N+\log \frac1{\epsilon}+\log\frac1\Delta\right)$  \\
 Filtering + min. label finding &  $\displaystyle\tilde O\left(\frac\Lambda{\chi\Delta^{3/2}}+\frac{\Phi}{\chi\sqrt\Delta}\right)$ & $\displaystyle O\left(\log N+\log \frac1{\epsilon}+\frac{\log\frac1{\chi\epsilon}}{\log\log\frac{1}{\chi\epsilon}}\times\log\frac 1\Delta\right)$\\ 
  \emph{Combined approaches} 
   & & \\
 This paper + phase estimation &  $\displaystyle \tilde O\left(\frac{\Lambda}{\chi^3\Delta}+\frac{\Phi}{\chi}\right)$ & 
 $\displaystyle O\left(\log N+\log \frac1{\epsilon}+\log\frac1\Delta\right)$  \\
 Filtering  + phase estimation&  $\displaystyle\tilde O\left(\frac{\Lambda}{\chi^3\Delta}+\frac{\Phi}{\chi}\right)$ & $\displaystyle O\left(\log N+\log \frac1{\epsilon}+\frac{\log\frac1{\chi\epsilon}}{\log\log\frac{1}{\chi\epsilon}}\times\log\frac 1\Delta\right)$ 
\end{tabular} \qquad \hspace{2pt}
\vspace{10pt}

\qquad(b) \hfill  \begin{tabular}[t]{l|c|c}
Ground energy estimation & Gates & Qubits \\
 \hline
 This paper  &$\displaystyle \tilde O\left(\frac\Lambda{\chi\xi^{3/2}}+\frac{\Phi}{\chi\sqrt\xi}\right)$ & 
 {$\displaystyle O\left(\log N+\log\frac1\xi \right)$} \\
 Phase estimation + min. label finding &  $\displaystyle\tilde O\left(\frac\Lambda{\chi^3\xi}+\frac{\Phi}{\chi}\right)$ & $\displaystyle O\left(\log N+\log\frac1\xi \right)$ 
	\\
Filtering + min. label finding  & $\displaystyle \tilde O\left( \frac{\Lambda}{\chi\xi^{3/2}}+ \frac{\Phi}{\chi\sqrt\xi} \right)$ & $\displaystyle O\left(\log N+\frac{\log\frac1{\chi}}{\log\log\frac{1}{\chi}}\times\log\frac 1\xi\right)$  \\
 \emph{Combined approaches} 
   & & \\
This paper + phase estimation &  $\displaystyle \tilde O\left(\frac{\Lambda}{\chi^3\xi}+\frac{\Phi}{\chi}\right)$ & 
 $\displaystyle O\left(\log N+\log\frac1\xi\right)$  \\
Filtering  + phase estimation&  $\displaystyle\tilde O\left(\frac{\Lambda}{\chi^3\xi}+\frac{\Phi}{\chi}\right)$ & $\displaystyle O\left(\log N+\frac{\log\frac1{\chi}}{\log\log\frac{1}{\chi}}\times\log\frac 1\xi\right)$ 
\end{tabular}\hfill \qquad\qquad\qquad

\caption{Algorithms in case the ground energy is not known beforehand. (a) Algorithms for ground state preparation. (b) Algorithms for estimating the ground energy to a precision of $\xi\ll \Delta$. 
The minimum label finding algorithm is a subroutine that we describe in Section~\ref{subsec:MinFinding}.  The combined approaches have been adjusted to yield the optimal scaling in $\Delta$ and $\xi$, respectively.} 
\label{tab:CostUnknown}
\end{table*}

The algorithm can also be adapted for the case when the ground energy is not known beforehand. 

\begin{theorem}[Ground state preparation for unknown ground energy] \label{thm:GSEunknown}
	If the ground energy is not known beforehand, the same task as in Theorem~\ref{thm:GSEknown}	
	can be achieved in a gate complexity of 
	\begin{equation} \label{eq:gatesunknown}
		\tilde O \left(\frac\Lambda{\chi \Delta^{3/2}}+\frac{\Phi}{\chi\sqrt\Delta} \right)
	\end{equation}
	and the same number \eqref{eq:qubits} of qubits.
\end{theorem}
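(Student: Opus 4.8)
The plan is to reduce the unknown-energy case to the known-energy case of Theorem~\ref{thm:GSEknown}. That theorem already prepares an $\epsilon$-close state once $\lambda_0$ is known to precision $O(\Delta/\log\frac1{\chi\epsilon})=\tilde O(\Delta)$, at a cost of $\tilde O\!\left(\frac\Lambda{|\phi_0|\Delta}+\frac\Phi{|\phi_0|}\right)\le\tilde O\!\left(\frac\Lambda{\chi\Delta}+\frac\Phi\chi\right)$ (using $|\phi_0|\ge\chi$). Since this is dominated by the target bound \eqref{eq:gatesunknown}, the whole problem reduces to the following: locate $\lambda_0$ to precision $\xi:=\tilde O(\Delta)$ within gate complexity $\tilde O\!\left(\frac\Lambda{\chi\Delta^{3/2}}+\frac\Phi{\chi\sqrt\Delta}\right)$, and then invoke Theorem~\ref{thm:GSEknown} once with the resulting estimate.

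For the energy search I would partition $[0,1]$ into $M=O(1/\xi)$ bins of width $\xi$ and reuse the filtering primitive underlying Theorem~\ref{thm:GSEknown}. For a bin index $k$, the techniques of \cite{ChildsKothariSomma15} implement a band-pass filter $F_k$—built from Hamiltonian simulation of $\tilde H$ for total time $\tilde O(1/\xi)$ together with one call to $\mathcal C_\phi$, hence $\tilde O(\Lambda/\xi+\Phi)$ gates—which acts on $\ket\phi$ and raises a flag with amplitude $\|\Pi_k\ket\phi\|$, where $\Pi_k$ projects onto the eigenspace of $\tilde H$ in bin $k$. The bin $k_0$ containing $\lambda_0$ carries amplitude $|\phi_0|\ge\chi$, and since $\Delta$ lower-bounds the gap, no other occupied bin lies within $\tilde O(\Delta)$ below it; thus the \emph{lowest occupied bin} coincides with $k_0$ and pins $\lambda_0$ to precision $\xi$.

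I would then find this lowest occupied bin with the minimum-label-finding subroutine of Section~\ref{subsec:MinFinding}, a D\"urr--H\o yer-type quantum search over the $M$ bins. The key point is that the marking is supplied coherently by $F_k$: preparing a uniform superposition $\frac1{\sqrt M}\sum_k\ket k$ and applying the conditional filter places the ground bin in the flagged subspace with amplitude $\ge\chi/\sqrt M=\chi\sqrt\xi$, so amplitude amplification \cite{BHMT02} reaches a marked bin in $O(1/(\chi\sqrt\xi))$ filter applications. Running the minimum search on top of this—repeatedly amplitude-amplifying the search for occupied bins below the current threshold—returns $k_0$ after $\tilde O(1/(\chi\sqrt\xi))$ filter applications in total. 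Multiplying by the per-filter cost $\tilde O(\Lambda/\xi+\Phi)$ and substituting $\xi=\tilde O(\Delta)$ yields exactly $\tilde O\!\left(\frac\Lambda{\chi\Delta^{3/2}}+\frac\Phi{\chi\sqrt\Delta}\right)$. The qubit count is unchanged: on top of the ancillas of Theorem~\ref{thm:GSEknown} one only adds a bin register of $O(\log M)=O(\log\frac1\Delta)$ qubits, which is already absorbed in \eqref{eq:qubits}.

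The main obstacle is that the marking here is \emph{soft} rather than a clean Boolean comparison oracle, so the textbook minimum-finding analysis does not apply verbatim. Concretely, the work is to (i) choose the filter so that its out-of-band leakage is suppressed well below $\chi$—which is what costs the polylogarithmic overhead hidden in $\tilde O$—while keeping its transition width $\lesssim\xi$, so that genuinely occupied bins are distinguished from empty ones and the returned bin is within $\tilde O(\Delta)$ of $\lambda_0$; and (ii) show that coherently combining the $\sqrt M$ Grover speedup over bins with the $1/\chi$ amplitude amplification over the $\ge\chi$ ground-state overlap yields the claimed $O(1/(\chi\sqrt\xi))$ query complexity, even though the number of occupied bins below the threshold, and the overlaps $\|\Pi_k\ket\phi\|$ themselves, are unknown. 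Boosting each step's success probability and carrying the resulting error budget through the reduction to Theorem~\ref{thm:GSEknown} then accounts for the remaining logarithmic factors.
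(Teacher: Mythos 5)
Your proposal is correct and essentially reproduces the paper's own proof: a grid of $L=\tilde O(1/\Delta)$ candidate energies, a filter conditioned coherently on the grid label (implementable at the cost of a single filter, since the label only enters through scalar phases multiplying $e^{\pm 2i\tilde H}$), the minimum-label-finding subroutine of Proposition~\ref{prop:minFinding} with threshold $\zeta=\Theta(\chi/\sqrt L)$ giving $\tilde O(\sqrt L/\chi)$ filter and $\mathcal C_\phi$ calls, and exponential suppression of bins more than $\tilde\Theta(\Delta)$ below $\lambda_0$ to satisfy the cumulative-leakage hypothesis \eqref{eq:AppALem1Sum}. The only differences are cosmetic: the paper runs the search once and the successful branch simultaneously yields the energy estimate (in the $\mathcal L$ register) and the prepared ground state, using the shifted $\cos^{2m}(H_j)$ Fourier filter rather than a generic band-pass, whereas you restructure it as an energy search followed by one fresh invocation of Theorem~\ref{thm:GSEknown} (taking care that the estimate lies below $\lambda_0$, e.g.\ by passing the bin's left endpoint), which is equally valid since that extra $\tilde O(\Lambda/(\chi\Delta)+\Phi/\chi)$ cost is dominated by \eqref{eq:gatesunknown}.
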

 Provided $\Phi$ is not too large (which can be assumed in most practical scenarios), our algorithm for the case of unknown ground energy also has a better runtime scaling than naive phase estimation, and uses significantly fewer qubits than an adaption of Poulin \& Wocjan's filtering method \cite{PoulinWocjan09} for this task (see Table~\ref{tab:CostUnknown}a). 

Furthermore, for very small $\Delta$, we show that alternatively, the scaling in $\Delta$ can be improved to $\sim \Delta$ at the expense of worsening the scaling in $\chi$ by combining our algorithm with a prior run of phase estimation to first obtain an estimate of the ground energy. 

\begin{theorem}[Combined algorithm for ground state preparation] \label{thm:combined}
	The same task as in Theorem~\ref{thm:GSEunknown} can be achieved in a gate complexity of 
	\begin{equation} \label{eq:gatescombined}
 	\tilde O\left( \frac{\Lambda}{\chi^3\Delta^\kappa} + \frac{\Lambda}{\chi \Delta^{(3-\kappa)/2}} + \frac{\Phi}{\chi\Delta ^{(1-\kappa)/2}} \right)
	\end{equation}
	for any choice of $\kappa\in[0,1]$, and the same number \eqref{eq:qubits} of qubits. 
\end{theorem}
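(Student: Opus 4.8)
The plan is to prove Theorem~\ref{thm:combined} by a two-stage reduction that interpolates between Theorem~\ref{thm:GSEknown} and Theorem~\ref{thm:GSEunknown}. The first stage uses phase estimation only to \emph{coarsely} localise $\lambda_0$ to a window of width $\delta := \Delta^\kappa$, and the second stage runs the unknown-energy algorithm of Theorem~\ref{thm:GSEunknown} but with its internal energy search restricted to this window. Since $\kappa\in[0,1]$ and the spectrum lies in $[0,1]$, we have $\Delta\le\delta\le1$; the two extremes $\kappa=1$ (so $\delta=\Delta$, the first stage does all the work and the second is a trivial one-window search, recovering Theorem~\ref{thm:GSEknown}) and $\kappa=0$ (so $\delta=1$, the first stage is vacuous and the second is the full algorithm of Theorem~\ref{thm:GSEunknown}) already fix the endpoints, and the content is the smooth trade-off in between.

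For the first stage, I would run phase estimation on $\ket\phi$ to precision $\delta$ and feed the outcome register into the minimum-label-finding subroutine of Section~\ref{subsec:MinFinding}, obtaining with constant probability an estimate $\mu$ with $|\mu-\lambda_0|=O(\delta)$. Because $\delta\ge\Delta>0$, the ground energy is still the global minimum of the spectrum, so the lowest phase-estimation bin, which appears with weight $\ge|\phi_0|^2\ge\chi^2$, pins down $\lambda_0$ to within $O(\delta)$ even though $\delta$ may be coarser than the gap; the only care needed is to suppress the phase-estimation tails with the usual median trick so that no spurious sub-$\lambda_0$ bin is ever returned, which costs only $O(\log\tfrac1\chi)$ overhead. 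By the analysis of that subroutine the cost of this stage is $\tilde O(\Lambda/(\chi^3\delta)+\Phi/\chi)$, which on substituting $\delta=\Delta^\kappa$ gives the first term $\Lambda/(\chi^3\Delta^\kappa)$ of \eqref{eq:gatescombined}. The phase register uses $O(\log\tfrac1\delta)\le O(\log\tfrac1\Delta)$ qubits, so the bound \eqref{eq:qubits} is respected.

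For the second stage, the key observation is that the extra $\Delta^{-1/2}$ factor separating Theorem~\ref{thm:GSEunknown} from Theorem~\ref{thm:GSEknown} arises from an amplitude-amplification search over the $\Theta(1/\Delta)$ candidate energy windows of width $\sim\Delta$ that tile $[0,1]$, on top of the known-energy cost $\tilde O(\Lambda/(\chi\Delta)+\Phi/\chi)$ per window. After the first stage has confined $\lambda_0$ to a window of width $O(\delta)$, only $\Theta(\delta/\Delta)$ candidate windows remain, so the search overhead shrinks from $\sqrt{1/\Delta}$ to $\sqrt{\delta/\Delta}=\Delta^{-(1-\kappa)/2}$. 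This yields a second-stage cost of $\tilde O\big(\Lambda/(\chi\Delta^{(3-\kappa)/2})+\Phi/(\chi\Delta^{(1-\kappa)/2})\big)$, exactly the second and third terms of \eqref{eq:gatescombined}, and it reuses the register layout of Theorem~\ref{thm:GSEunknown}, so the qubit count \eqref{eq:qubits} is unchanged. Adding the two stages, with the subdominant $\Phi/\chi$ from the first stage absorbed into $\Phi/(\chi\Delta^{(1-\kappa)/2})$, boosting each stage's success probability to a constant, and combining by a union bound, gives the claimed complexity and an output $\epsilon$-close to $\ket{\lambda_0}$.

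The main obstacle I anticipate is making the second stage rigorous: I must open up the proof of Theorem~\ref{thm:GSEunknown} and verify that its cost genuinely scales as $\sqrt{(\text{width of search range})/\Delta}$, i.e. that the algorithm can be initialised with an arbitrary pre-localised window and that its amplitude-amplification search then runs over only $\Theta(\delta/\Delta)$ marked items rather than $\Theta(1/\Delta)$. A secondary subtlety is the interface between the stages: I must confirm that the $O(\delta)$ window returned by the first stage is large enough to contain $\lambda_0$ with certainty (after tail suppression) yet is handed to the second stage in a form compatible with its energy-shifting primitive, and that the error budget $\epsilon$ and the failure probabilities compose without eroding the $\tilde O$ bookkeeping. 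Finally, one should note (and optimise) that $\kappa$ is a free tuning parameter: for a given regime of $\Delta,\chi,\Phi$ the three terms of \eqref{eq:gatescombined} can be balanced, and $\kappa=1$ in particular reproduces the ``This paper + phase estimation'' row of Table~\ref{tab:CostUnknown}a.
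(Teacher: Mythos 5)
Your proposal is correct and follows essentially the same route as the paper: a crude phase-estimation-plus-minimum-label-finding stage (Appendix~\ref{app:PEAenergy}) estimating $\lambda_0$ to precision $\Delta^\kappa$ at cost $\tilde O(\Lambda/(\chi^3\Delta^\kappa)+\Phi/\chi)$, followed by the unknown-energy algorithm of Section~\ref{subsec:proofsUnknown} with its energy grid (spacing $\tilde O(\Delta)$) restricted to the resulting window, so that the minimum label finding searches over only $L'=\tilde\Theta(\Delta^{\kappa-1})$ candidates and costs $\tilde O\bigl(\sqrt{L'}(\Lambda/\Delta+\Phi)/\chi\bigr)$, reproducing the second and third terms of \eqref{eq:gatescombined}. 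One minor caveat: the paper suppresses the spurious sub-$\lambda_0$ phase-estimation tails in stage one not by a median trick but by running phase estimation at the finer precision $2^k=\tilde O(2^n/\chi^2)$ --- this is exactly where the $\chi^{-3}$ comes from, and the tightness argument at the end of Appendix~\ref{app:PEAenergy} indicates that a mere $O(\log\frac1\chi)$-overhead fix would not suffice --- but since your stage-one cost is quoted from that appendix anyway, your final bound is unaffected.
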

In particular, choosing $\kappa=1$ in Theorem~\ref{thm:combined} yields the optimal scaling in $\Delta$ of $\sim\Delta$.

We show moreover that, with high probability, the algorithms for the case of unknown ground energy also find  the ground energy to a precision of $\tilde O(\Delta)$. Since $\Delta$ can be any reliable lower bound on the spectral gap, this yields a general algorithm for estimating the energy. 

\begin{theorem}[High-precision ground energy estimation] \label{thm:estimation}
	Let $\xi = \tilde O(\Delta)$. Then $\lambda_0$ can be estimated to an additive precision of $\xi$ in a gate complexity of 
	\begin{equation}
		\tilde O\left(\frac\Lambda{\chi\xi^{3/2}}+\frac{\Phi}{\chi\sqrt\xi}\right),
	\end{equation}
	using $O(\log N+ \log \xi^{-1})$ qubits. Alternatively, the combined approach of Theorem~\ref{thm:combined} achieves this task in a gate complexity of 
	\begin{equation} 
 	\tilde O\left( \frac{\Lambda}{\chi^3\xi^\kappa} + \frac{\Lambda}{\chi \xi^{(3-\kappa)/2}} + \frac{\Phi}{\chi\xi^{(1-\kappa)/2}} \right)
	\end{equation}
	and the same number of qubits. 
\end{theorem}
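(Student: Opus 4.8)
The plan is to observe that the two algorithms for unknown ground energy, Theorem~\ref{thm:GSEunknown} and Theorem~\ref{thm:combined}, do not merely prepare the ground state: in the course of locating $\lambda_0$ they produce, as a byproduct, a numerical estimate of the ground energy, and the precision of this estimate is controlled by the spectral-gap lower bound that is fed into the algorithm. My first step would therefore be to make this byproduct explicit, isolating from the unknown-energy routine the quantity that serves as the estimate of $\lambda_0$ (this is the output of the search / minimum-label-finding subroutine of Section~\ref{subsec:MinFinding}) and showing that, with high probability, it localises $\lambda_0$ to within one resolution cell, i.e. to additive precision $\tilde O(\Delta)$.

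The second step is a simple reduction. Since we may assume $\xi\le\Delta$ (otherwise the claim already follows from running the algorithm unchanged), and since $\Delta$ is by hypothesis a valid lower bound on the true spectral gap, $\xi$ is itself a valid lower bound on the gap. The algorithms of Theorems~\ref{thm:GSEunknown} and \ref{thm:combined} require only such a lower bound for their correctness, so I would run each of them verbatim but with the gap parameter $\Delta$ replaced throughout by $\xi$. By the high-probability localisation established in the first step (now with resolution $\xi$ in place of $\Delta$), the energy estimate returned has additive precision $\tilde O(\xi)=\xi$.

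The cost then follows by substitution. Replacing $\Delta\to\xi$ in the gate complexity \eqref{eq:gatesunknown} of Theorem~\ref{thm:GSEunknown} gives $\tilde O\bigl(\Lambda/(\chi\xi^{3/2})+\Phi/(\chi\sqrt\xi)\bigr)$, and the analogous substitution in \eqref{eq:gatescombined} gives the stated combined bound for every $\kappa\in[0,1]$. For the qubit count I would note that pure energy estimation carries no state-fidelity requirement, so the parameter $\epsilon$ never enters; the $\log\log\frac1\epsilon$ term of \eqref{eq:qubits} is therefore absent, and with $\log\frac1\Delta\to\log\frac1\xi$ the register size reduces to $O(\log N+\log\xi^{-1})$, as claimed.

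The main obstacle is entirely contained in the first step: verifying that the search subroutine genuinely pins $\lambda_0$ down to one cell of width $\sim\xi$ and that this can be guaranteed with \emph{high} probability. The preparation theorems only promised constant success probability, whereas the estimation statement asserts a high-probability guarantee; I would therefore need to check that the confidence of the comparison test used in the search (``does the ground energy lie below the current threshold?'') can be amplified --- e.g. by $O(\log\frac1\delta)$ repetitions or by a suitably accurate amplitude-estimation step --- to reach failure probability $\delta$ while only inflating the runtime by polylogarithmic factors that are absorbed into $\tilde O$. Establishing that this boosting does not disturb the $\xi^{-3/2}$ (respectively $\xi^{-\kappa}$) scaling is the one place where real work, rather than bookkeeping, is required.
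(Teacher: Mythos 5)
Your proposal is correct and follows essentially the same route as the paper: the paper's proof likewise observes that $\Delta$ enters only as an input lower bound on the gap, runs the algorithms of Theorems~\ref{thm:GSEunknown} and~\ref{thm:combined} with the gap parameter replaced by $\Delta'=\xi$, reads off the energy estimate from the $\mathcal L$ register via the precision bound \eqref{eq:deltaTildeJ}, and obtains the stated complexities by substitution. The ``obstacle'' you flag is already resolved by Proposition~\ref{prop:minFinding}, whose minimum label finding succeeds with probability $1-5\delta$ at only a $\log^2(n/\delta)$ multiplicative overhead, so the high-probability localisation of $\lambda_0$ to one cell of width $\sim\xi$ comes for free within the $\tilde O$ notation.
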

Provided that $\Phi$ is not too large, this also scales better than performing the same task with phase estimation and amplitude amplification (see Table~\ref{tab:CostUnknown}b).

In terms of query complexities, the coefficients of  $\Phi$ and $\Lambda$  in Tables~\ref{tab:prepareCostknown}--\ref{tab:CostUnknown}  are, up to polylogarithmic factors, the number of calls to  $\mathcal C_\phi$ and (unit time) Hamiltonian simulation, respectively. Note that the qubit requirements stated in Tables~\ref{tab:prepareCostknown}--\ref{tab:CostUnknown} exclude ancilla qubits required to preform Hamiltonian simulation\footnote{In fact, the Hamiltonian simulation algorithms  \cite{LowChuang16,BerryChildsKothari15,LowChuangSP} only require few ancilla qubits and leave the $O(\cdot)$ expressions in Tables~\ref{tab:prepareCostknown}--\ref{tab:CostUnknown} unchanged. \cite{BCCKS15} requires  $O\left(\frac{\log (d)\log(\beta \Delta^{-1}\log(\chi^{-1})/\epsilon)}{\log\log(\beta \Delta^{-1}\log(\chi^{-1})/\epsilon)}\right)$ additional qubits, where $\tilde H = \sum_{j=1}^d \beta_j U_j$ with unitaries $U_j$ costing $O(\Lambda)$ elementary gates and $\beta = \sum_j |\beta_j|$ (see \cite[Table 1]{LowChuang16} for an overview).}. 

Our algorithms are inspired by classical power iteration methods \cite{ZAMM}.
They are based on techniques (termed the ``Fourier method'') that were recently developed for the quantum linear systems problem \cite{ChildsKothariSomma15}, and are based on the observation that the \emph{Linear Combination of Unitaries}, or \emph{LCU Lemma} \cite{BerryChildsKothari15}, can be used to implement other functions of a Hamiltonian \cite{arXiv:1705.01843}.

We now briefly outline the basic idea of the algorithms. It is easy to see that for  positive-semidefinite $H$ with non-degenerate lowest eigenvalue $0$, high powers of $\cos H$ approximately project any given state into a state proportional to the unique ground state of $H$. In case the ground energy of $\tilde H$ is known, $\tilde H$ can be easily transformed into another Hamiltonian $H$ such that $\ket{\lambda_0}$ is the unique eigenvector of $H$ of eigenvalue $\approx 0$. The outline of the algorithm in that case is as follows:
\begin{enumerate}\itemsep1pt 
\item Approximate  $\cos^M H$ as a linear combination of terms of the form $e^{-iHt_k}$. 
\item Using the techniques in \cite{ChildsKothariSomma15}, we implement this linear combination with some amplitude using Hamiltonian simulation and the  LCU Lemma. 
\item We use amplitude amplification to boost the overlap with the target state. Alternatively, the fixed point search algorithm \cite{YoderLowChuang14} can be used for this step. 
\end{enumerate}

The outline of the algorithm in case the ground energy is unknown beforehand is as follows:
\begin{enumerate}
\item Adapt the previous algorithm into a circuit controlled on an ancilla register $\ket E$, which runs steps 1 and 2 of the previous algorithm, assuming that the ground energy were $E$. 
\item Divide $[0,1]$ into $L=\tilde O(\Delta^{-1})$ equally spaced values $E_0,\ldots,E_{L-1}$. Run the algorithm with $\sqrt{L^{-1}}\sum \ket {E_j}$ on the ancilla register. 
\item Use the minimum label finding algorithm (Section~\ref{subsec:MinFinding}) to search for the smallest $j$ such the residual state of $\ket{E_j}$ has large norm.
\item When this search succeeds, then with high probability the resulting $E_j$ is within $\xi$ of the true ground energy and the residual state is a good approximation of the ground state.
\end{enumerate}

The outline of the combined approaches is as follows:
\begin{enumerate}
\item Use phase estimation and amplitude amplification to obtain a ``crude'' estimate of the ground energy. This provides an interval $I$ which is known to contain the real ground energy. 
\item Take $L\approx |I|/\xi$ equally spaced values $E_0,E_1,\ldots E_{L-1}$ in $I$, and 
run the previous algorithm with these values of $E_j$. 
\end{enumerate}

\section{Algorithm for the case of known ground energy} \label{sec:known}

In this section, we present the main technical analysis of  our algorithm and prove Theorem~\ref{thm:GSEknown}. 
Suppose that the value of $\lambda_0$ is known to a precision of $\delta=O\left(\Delta/\log\frac1{\chi\epsilon}\right)$. 
Let $E$ be a known value such that $0\leq E\leq\lambda_0$ and $\delta_E:=\lambda_0-E <\delta$. 
Define $H:=\tilde H - (E-\tau)\id$ for some small value of $\tau$ chosen below. Then, $\ket{\lambda_0}$ is the (unique) ground state of $H$ with eigenvalue $\tau + \delta_E$, and by assumption all other eigenvalues of $H$ are $ \geq \tau+\delta_E + \Delta$. 
The method presented here is based on the observation that high powers of $\cos  H$ are approximately proportional to projectors onto the ground state:  
\begin{equation}
	\cos^M  H \ket\phi = \phi_0\cos^M(\tau+\delta_E)\left( \ket{\lambda_0} + \frac1{\phi_0}\frac{ \cos^M  H}{\cos^M(\tau+\delta_E)}\ket{\lambda_0^\perp}\right).
\end{equation}
The norm of the second term is bounded by $|\phi_0|^{-1} e^{-O(M(\tau+\delta_E)\Delta)}$. Indeed, since $\cos x$ is concave and decreasing on $[\tau, 1+\tau]$, 
\begin{align}
 \left\|\frac{ \cos^M   H}{\cos^M(\tau+\delta_E )}\ket{\lambda_0^\perp}\right\| &<  \left(\frac{\cos(\tau+\delta_E)-\sin(\tau+\delta_E)\Delta}{\cos(\tau+\delta_E)}\right)^M \\
 &= \left( 1- \tan(\tau+\delta_E)\Delta\right)^M \\
 &= e^{-\Omega(M\tan(\tau+\delta_E)\Delta)} = e^{-\Omega(M(\tau+\delta_E)\Delta)},
\end{align}
where in the last step we used $\tan x \geq x$ for $x\in[0,1+\tau]$. 
Thus, 
\begin{equation} \label{eq:cosFullApprox}
	\left\| \frac{\cos^M H \ket\phi}{\|\cos^M  H \ket\phi\| } - \ket{\lambda_0} \right\| =O(\epsilon),
\end{equation}
provided that  
\begin{equation}
M=\Omega\left(\frac1{\Delta(\tau+\delta_E)} {\log\frac1{|\phi_0|\epsilon}}\right).
\end{equation}
On the other hand, using that $\cos x > 1 - x^2/2$, 
\begin{align}
	\cos^M(\tau+\delta_E) &> \left(1-\frac{(\tau+\delta_E)^2}2\right)^M\\
	&= e^{-O((\tau+\delta_E)^2M)}.
\end{align}
Thus,
\begin{equation} \label{eq:cosNorm}
\|\cos^M H\ket\phi\| = \Omega(|\phi_0|),
\end{equation}
provided that $\tau+\delta_E = O(1/\sqrt M)$. Hence, since by assumption $\delta_E < \delta$, choosing
\begin{equation} \label{eq:choicetau}
	\tau = \Theta\left( \frac{\Delta}{\log\frac1{\chi\varepsilon} }\right) 
\end{equation}
and 
\begin{equation}
 M = \Theta\left( \frac1{\Delta^2} \log^2\frac1{\chi\epsilon}\right)
\end{equation}
satisfies both \eqref{eq:cosFullApprox} and \eqref{eq:cosNorm}. 

Our aim in the following is to prepare $\cos^M  H\ket\phi$. The strategy we employ is as follows. First, we approximate $\cos^M  H$ as a linear combination of few unitaries of the form $e^{-i Ht_k}$ for suitable values of $t_k$. Second, we implement this linear combination with some amplitude using Hamiltonian simulation and the LCU Lemma \cite{BerryChildsKothari15}. Third, we use amplitude amplification or fixed point search to boost the overlap with the target state. 

In the following, assume for simplicity that $M=2m$ is even (the algorithm can be adapted to odd $M$ with minor modifications). Observe that
\begin{equation}
 \cos^{2m}x = \left(\frac{e^{ix} + e^{-ix }}2 \right)^{2m} 
  = 2^{-2m} \sum_{k=-m}^m \binom{2m}{m+k} e^{2ikx}.
\end{equation}
Note that
\begin{equation}\label{eq:chernoff}
 2^{-2m} \sum_{k=m_0+1}^m \binom{2m}{m+k} \leq e^{-m_0^2/4m}.
\end{equation}
Indeed, the LHS is the probability of seeing more than $m+m_0$ heads when flipping $2m$ coins, and \eqref{eq:chernoff} follows from the Chernoff bound. Thus, 
\begin{equation}\label{eq:truncFourier}
\cos^{2m} H =  \sum_{k=-m_0}^{m_0} \alpha_k e^{-2i H k} + O(\chi\epsilon), 
\end{equation}
where
\begin{equation}\label{eq:alphak}
\alpha_k := 2^{-2m} \binom{2m}{m+k},
\end{equation}
and 
\begin{equation} \label{eq:m0Fourier}
m_0 = \Theta\left(\sqrt{M\log\frac1{\chi\epsilon}}\right)
=  \Theta\left(\frac1{ \Delta}\log^{3/2}\frac1{\chi\epsilon}\right).
\end{equation}

Next,  $e^{-2i H k}$ can be implemented using Hamiltonian simulation algorithms. 
To implement the RHS of \eqref{eq:truncFourier}, we employ the LCU Lemma: Let $B$ be a circuit on $b:=\lceil\log_2(2m_0+1)\rceil$ qubits that maps $\ket 0^{\otimes b} $ to 
\begin{equation}
B\ket 0^{\otimes b} := \frac1{\sqrt\alpha} \sum_{k=-m_0}^{m_0} \sqrt{\alpha_k}\ket k,
\end{equation}
where $\alpha = \sum_{k=-m_0}^{m_0}\alpha_k$, and let $U$ be the controlled Hamiltonian simulation $U\ket k\ket\phi = \ket k e^{-2i Hk}\ket\phi$. Then,
\begin{equation}
 (B^\dagger\otimes \id) U(B\otimes\id) \ket\phi   = \frac1\alpha \ket0^{\otimes b} \sum_{k=-m_0}^{m_0} \alpha_k e^{-2i Hk}\ket\phi + \ket{R}
\end{equation} 
where $(\ketbra00^{\otimes b}\otimes\id)\ket R=0$. 

The final step of the algorithm is to boost the overlap with amplitude amplification or fixed point search. Measuring the ancillas will then project the state onto 
\begin{equation}
\ket{\lambda_0'}:=\frac{\sum_{k=-m_0}^{m_0} \alpha_k e^{-2i Hk}\ket\phi}{\|\sum_{k=-m_0}^{m_0} \alpha_k e^{-2i Hk}\ket\phi\|}
\end{equation}
with probability close to $1$.  From \eqref{eq:truncFourier},
\begin{equation}
\frac{\sum_{k=-m_0}^{m_0} \alpha_k e^{-2i Hk}\ket\phi}{\|\sum_{k=-m_0}^{m_0} \alpha_k e^{-2i Hk}\ket\phi\|} = \frac{\cos^{2m} H\ket\phi}{\|\cos^{2m} H\ket\phi\|} + O(\epsilon)
\end{equation}
Thus, \eqref{eq:cosFullApprox} implies
\begin{equation}
	\ket{\lambda_0'} = \ket{\lambda_0}+O(\epsilon),
\end{equation}
as required. Eq.~\eqref{eq:cosNorm} implies that the number of repetitions is 
$O\left(\alpha/\|\sum_{k=-m_0}^{m_0} \alpha_k e^{-2i Hk}\ket\phi\| \right)= O\left(\alpha/|\phi_0|\right)$.

We now calculate the gate count of the entire algorithm. First note that $B$ 
can be implemented with $O(2^b)=O(m_0)$ elementary gates \cite{SBM06}. Next, the gate cost to implement $e^{\pm 2iH}$ to accuracy $\epsilon'$ in operator norm is $O(\Lambda\polylog(N,\frac1{\epsilon'}))$, depending on the precise model and Hamiltonian simulation algorithm used (see \cite[Table 1]{LowChuang16} for an overview). 
Here, we require $\epsilon' = O( \epsilon|\phi_0|/m_0)$. 
Thus, the gate cost of $U$ is $O(m_0 \Lambda\polylog(N,\frac{m_0}{\epsilon|\phi_0|}))$ \cite[Lemma 8]{ChildsKothariSomma15}. 
Note that Hamiltonian simulation with respect to $H$ can be trivially obtained from Hamiltonian simulation with respect to $\tilde H$, either by a phase shift or absorbing these phases into the values of the $\alpha_k$. 
Finally, note that $\alpha=O(1)$ and each iteration of amplitude amplification or fixed point search requires $O(1)$ uses of $\mathcal C_\phi$, $B$ and $U$. The final gate complexity is thus
\begin{equation}\label{eq:finalgatecomplexityFourierKnown}
O\left(\frac{1}{|\phi_0|}\left(m_0 \Lambda\polylog\left(N,\frac{m_0}{\epsilon|\phi_0|}\right)+\Phi\right)\right) 
=O\left(\frac\Lambda{|\phi_0| \Delta}\polylog\left(N,\frac1\Delta,\frac1{|\phi_0|\epsilon}\right)+\frac {\Phi}{|\phi_0|}\right).
\end{equation}
Note that if fixed point search is used for the final step, we also require a good lower bound of $|\phi_0|$. However, we can simply run the algorithm with $O(1/\chi')$ repetitions in fixed point search for $\chi'=1,\frac12,\frac14, \ldots$ until we successfully project the ancillas into $\ket 0^{\otimes (b+q)}$. Indeed, this only results in an overall multiplicative overhead of $O(\log 1/|\phi_0|)$, yielding \eqref{eq:gatesknown}. Moreover, whenever we succeed, the resulting state is $\ket{\lambda_0'}$ independently of the value of $\chi'$ that was used. 
The number of ancilla qubits required is 
$b$ plus any ancillas necessary for performing Hamiltonian simulation. Note that the number of ancilla qubits required for implementing $U$ is essentially the same as the number of ancilla qubits required for a single run of $e^{\pm 2i\tilde H}$, since the latter can be re-used.  Thus, the total number of qubits required is given by \eqref{eq:qubits}, plus any ancilla qubits  required for Hamiltonian simulation (see also \cite[Table 1]{LowChuang16}). This proves Thm.~\ref{thm:GSEknown}. \qed

\section{Algorithm for the case of unknown ground energy} \label{sec:unknown}

The algorithm of Theorem~\ref{thm:GSEknown} presented in the previous section requires an estimate $E$ of $\lambda_0$ to a precision of $O(1/\sqrt M)$. Since $E$ can simply be viewed as an input parameter, the algorithm can in principle also be run with different values of $E$. It is easy to see that for any $E\in[0,\lambda_0]$, the algorithm would, if successful, produce a good approximation of $\ket{\lambda_0}$, but may have an exponentially small probability of success. Thus, if the ground energy is not known beforehand, one could simply run the algorithm for increasing values of $E$, using a step size $O(1/\sqrt M)$, and stop when first successful. It is moreover easy to see that the value of $E$ at which we first succeed is with high probability a good estimate of $\lambda_0$. Clearly, the runtime of this algorithm would result in an overall factor of $O(\sqrt M)=\tilde O(1/\Delta)$ compared to Eq.~\eqref{eq:gatesknown}. 

It turns out that this ``classical search'' for the correct value of $E$ can be replaced by a ``quantum search'' that quadratically improves the overhead from $O(\sqrt M)$ to $\tilde O(\sqrt[4]{M})$. We call this search the \emph{minimum label finding} algorithm, which we describe in Section~\ref{subsec:MinFinding} as a general subroutine, and which may be of independent interest. We then apply this algorithm to the ground state preparation and ground energy estimation problem in Section~\ref{subsec:proofsUnknown}. 

\subsection{Minimum label finding} \label{subsec:MinFinding}

In this section, we describe a general subroutine to find the minimum ``label'' in an ancilla register amongst terms  with at least some given amplitude in a given superposition. To motivate this result, consider the following scenario. Suppose we have $L$ unitaries $U_0,\ldots U_{L-1}$ that prepare the states $U_i\ket 0\!\ket 0 = \ket 0\! \ket{\Phi_i} + \ket{R_i}$, where $(\ketbra00 \otimes\id) \ket{R_i}=0$. Let $\chi\in(0,1)$ be a known real number. Suppose we want to approximately find the smallest $i$ (or alternatively, prepare the corresponding $\ket{\Phi_i}$) such that $\|\ket{\Phi_i}\| \geq \chi$. The naive way to do this is to use amplitude estimation (or fixed point search) to increase the amplitude of $\ket 0$ on the first register of $U_i\ket 0\!\ket 0$ for $i=0,1,\ldots$, each time using only $O(1/\chi)$ repetitions of $U_i$, until we first succeed. This requires $\tilde O(L/\chi)$ calls to the individual  $U_i$'s in total. Below, we show that a quadratic improvement in $L$ to $\tilde O(\sqrt L / \chi)$ can be achieved, provided that performing the controlled version $U=\sum_i \ketbra ii\otimes U_i$ can be done with essentially the same cost as the individual $U_i$'s. The algorithm is based on a simple binary search technique to successively find the binary digits of the smallest ``label'' $i$ amongst the terms in the superposition $ \sqrt{L^{-1}}\ket0\sum_i\ket i\ket{\Phi_i} + \ket R$ with a norm of at least $\chi/\sqrt L$. 

\begin{proposition}[Minimum label finding] \label{prop:minFinding}
Let $\mathcal C_\Phi$ be a circuit on $q+n+m$ qubits that prepares the state
\begin{equation}
	\ket\Phi := \mathcal C_\Phi\ket{0}^{\otimes(q+n+m)}=\ket 0^{\otimes q}\sum_{i=0}^{2^n-1} \ket i \ket{\Phi_i} + \ket R,
\end{equation}
where $\ket{\Phi_i}$ are non-normalised $m$-qubit states and $\ket R$ has no overlap with $\ket 0^{\otimes q}$ on the first $q$ qubits. Let $\zeta\in (0,1)$ be a known real number, and $\tilde J, J\in\{0,\ldots,2^n-1\}$ be (unknown) integers such that $\|\ket{\Phi_J}\|\geq \zeta$ and 
\begin{equation}\label{eq:AppALem1Sum}
	\sum_{i=0}^{\tilde J} \|\ket{\Phi_i}\|^2 < \zeta^2\frac{\delta}{4n\log\frac n\delta\ln^2\frac2\delta},
\end{equation}
where $\delta\in (0,1/5)$. Then, there exists a quantum algorithm that prepares the state $\ket j\ket{\Phi_j}/\|\ket{\Phi_j}\|$ for some $j\in[\tilde J,J]$ with probability at least $1-5\delta$ 
using
\begin{equation} \label{eq:AppALem1Gates}
	O\left(\frac{n}{\zeta} \log^2\frac n\delta\right)
\end{equation}
calls to $\mathcal C_\Phi$, 
\begin{equation}
O\left(\frac{\poly(q,n,m)}{\zeta} \log^2\frac n\delta\right)
\end{equation} 
additional elementary gates, and $q+n+m+1$ qubits. 
\end{proposition}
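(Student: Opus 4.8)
The plan is to realise the informal binary-search idea sketched just before the statement: determine the binary digits of a suitable small ``heavy'' label one bit at a time, from the most significant down to the least significant, using a single amplitude-amplification primitive as the test at each bit. For a set of labels $S\subseteq\{0,\dots,2^n-1\}$ write $P_S=\ketbra00^{\otimes q}\otimes\bigl(\sum_{i\in S}\ketbra ii\bigr)\otimes\id$ for the projector onto ``flag register equal to $\ket0^{\otimes q}$ and label in $S$''; then $\|P_S\ket\Phi\|^2=\sum_{i\in S}\|\ket{\Phi_i}\|^2=:w(S)$. Since $S$ will always be a dyadic interval fixed by a prefix of the label bits, the reflection about $P_S$ costs only $\poly(q,n,m)$ gates, and the algorithm only ever invokes $\mathcal C_\Phi$, $\mathcal C_\Phi^\dagger$, the reflection about $\ket0^{\otimes(q+n+m)}$, and the reflections about the $P_S$; hence the $\mathcal C_\Phi$-count is governed by the number of amplification steps, and the extra ancilla is the single qubit flagging success of the final amplification. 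Maintaining a current dyadic interval $I=[\ell,r]$ (initially $[0,2^n-1]$), each round splits $I$ into its lower and upper halves $L,U$ and decides the next bit by a \emph{heaviness test} on $L$: run $O(1/\zeta)$ rounds of amplitude amplification of $P_L$ on $\mathcal C_\Phi\ket0$ and measure. If $w(L)\gtrsim\zeta^2$ the amplified state has $\Omega(1)$ overlap with $P_L$, whereas if $w(L)$ is negligible the overlap stays $O(w(L)/\zeta^2)$; set the next bit to $0$ (recurse on $L$) if the test fires and to $1$ (recurse on $U$) otherwise. Crucially, the \emph{preparation} comes for free: once the search has committed to a confirmed-heavy interval, one final $O(1/\zeta)$-fold amplification followed by measuring the flag and the label register collapses the data register to exactly $\ket{\Phi_j}/\|\ket{\Phi_j}\|$ for the measured $j$, so no explicitly targeted and potentially expensive projection onto a single prescribed label is needed.

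Correctness rests on two invariants, tracked w.h.p. over the $n$ adaptive rounds. First, a \emph{lower bound} $j>\tilde J$: the only way to place the final label in the negligible region $\{0,\dots,\tilde J\}$ is to take a left-move into a half lying entirely in $[0,\tilde J]$, whose weight is at most $w([0,\tilde J])=\sum_{i=0}^{\tilde J}\|\ket{\Phi_i}\|^2$; after $O(1/\zeta)$-fold amplification the probability of such a spurious ``fire'' is $O(w([0,\tilde J])/\zeta^2)$ per amplification, and this is exactly where the peculiar bound \eqref{eq:AppALem1Sum} enters — it is chosen so that, summed over \emph{all} amplifications performed in the whole run, the total false-descent probability is $O(\delta)$. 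Second, an \emph{upper bound} $j\le J$, obtained from the witness $J$: I would maintain the invariant $\ell\le J$, which is trivially preserved by left-moves and is preserved by right-moves because a right-move occurs only when the left half tests light, and if $J$ lay in that half its weight $\|\ket{\Phi_J}\|^2\ge\zeta^2$ would have made it test heavy. Together these give $j\in[\tilde J,J]$.

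For the cost, each bit-decision uses $O(1/\zeta)$ amplification steps, and because the decisions are adaptive — a single wrong decision at a high-order bit is fatal — every round must succeed with probability $1-O(\delta/n)$; boosting a constant-bias test to this reliability by repetition and majority vote, with an amplification schedule robust to the unknown value of $w(L)$, contributes the two logarithmic factors and yields the stated $O\!\left(\frac n\zeta\log^2\frac n\delta\right)$ calls to $\mathcal C_\Phi$ in \eqref{eq:AppALem1Gates}, together with the matching $\poly(q,n,m)$ overhead for the interval reflections. A union bound over the $O(1)$ qualitatively distinct failure modes (spurious descent, missing the witness, a failed boosted decision, and failure of the final amplification) accounts for the $1-5\delta$ success probability.

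The step I expect to be the main obstacle is the joint control of reliability and amplitude through the adaptive search. The heaviness test only estimates the \emph{total} weight $w(L)$ of a half, which cannot certify that a \emph{single} label there has norm $\Omega(\zeta)$; a naive threshold at exactly $\zeta^2$ admits boundary configurations in which the binary search narrows onto a label of amplitude $\ll\zeta$, making the final preparation exceed the budget. The delicate part is therefore to choose the threshold gap in the test, and to combine it with the quantitative smallness of \eqref{eq:AppALem1Sum}, so that the interval committed to at every round stays genuinely heavy and the measure-and-collapse step both lands in $[\tilde J,J]$ and succeeds in $O(1/\zeta)$ amplifications; propagating the per-round error budget $O(\delta/n)$ correctly through this analysis, rather than the big-$O$ bookkeeping, is where the real work lies.
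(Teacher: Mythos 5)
You have correctly reconstructed the paper's first-stage skeleton --- most-significant-bit-first binary search over dyadic prefix intervals, per-bit heaviness tests boosted to reliability $1-O(\delta/n)$, per-amplification soundness $O\bigl(w([0,\tilde J])/\zeta^2\bigr)$ so that \eqref{eq:AppALem1Sum} suppresses false descents into $[0,\tilde J]$ --- and your cost accounting matches \eqref{eq:AppALem1Gates}. But the obstacle you flag at the end is not a finishing detail to be absorbed into threshold tuning: it is a genuine gap, and no choice of threshold gap closes it, because a greedy one-pass descent cannot maintain the invariant that the committed interval stays heavy. Left-moves can abandon the witness $J$ irrevocably: the promise places no constraint on the labels strictly between $\tilde J$ and $J$, so an adversary may put total weight $\geq \zeta^2$ into the left half of the current interval, spread over $2^{n'}$ labels of norm $\zeta 2^{-n'/2}$ each, with $J$ in the right half. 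Your test fires, you descend into the cluster, and thereafter the available weight halves at every level; once both halves test light, your rule forces right-moves, which can strand you on a label of amplitude $\ll \zeta$ (or zero, if the cluster sits in the left portion of the interval), and the final $O(1/\zeta)$-fold amplification then fails with probability near $1$. More generally, for any thresholds $\theta_{\mathrm{quiet}}<\theta_{\mathrm{fire}}$, a right-move only guarantees residual weight $w(I)-\theta_{\mathrm{fire}}$ in the committed half, so weight can leak at every one of the $n$ levels; the only thing that would preserve heaviness under right-moves is the witness $J$ itself remaining in the interval, and it is precisely the left-moves that can lose it.

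The paper's proof contains exactly the machinery your plan lacks: a second, backtracking stage. Each bit is decided by fixed-point search rather than fixed-iteration amplitude amplification --- FPS cannot overshoot and needs only the lower bound $\zeta$, which supplies the completeness ``robust to the unknown $w(L)$'' that you postulate but do not construct --- repeated $K=\lceil\log(n/\delta)\rceil$ times; if both the $0$- and $1$-extensions fail, the round is declared inconclusive, and the algorithm then searches the successively shorter prefixes $a_1\ldots a_{k-1}$, $a_1\ldots a_{k-2},\ldots$ until one succeeds, and only then measures the remaining label qubits (so preparation indeed ``comes for free,'' but only because backtracking guarantees the projected prefix retains enough weight; the output $j\in[\tilde J,J]$ may itself have norm far below $\zeta$, which the proposition does not forbid). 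The range guarantee $j\leq J$ then follows from a consistency argument (after the first divergence from the binary expansion of $J$, the committed bit is $0$ where $J$'s is $1$, so every completion of a longer committed prefix is $\leq J$), and the borderline configurations your threshold worry points at are dispatched by the observation that failing during backtracking at a prefix where a single search succeeds with probability $p$ requires first succeeding $K$ times in stage one and then failing $K$ times, an event of probability $(p_{i_1}(1-p_{i_1}))^K\leq 4^{-K}$ uniformly in $p_{i_1}$ --- no threshold gap is ever chosen. Note finally that the per-amplification soundness bound you assert is itself nontrivial for FPS: it is Lemma~\ref{lem:FPSproperties}(ii), proved in the paper by a Chebyshev-polynomial argument. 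Repairing your proposal requires importing both the backtracking stage with its $(p(1-p))^K$ bookkeeping and this soundness lemma.
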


The minimum label finding algorithm uses the fixed-point search algorithm \cite{YoderLowChuang14}, which can be thought of as a variation of amplitude amplification, with the additional features that it is not possible ``overshoot'' the target state, and that only a lower bound on the initial overlap is required to be known. More precisely, let $\mathcal C$ be a circuit on $n'$ qubits that prepares the state $\mathcal C\ket 0^{\otimes n'} =\lambda\ket{T}+ \sqrt{1-\lambda^2}\ket{\bar T}$ with $\braket{T}{\!\bar T}=0$, and $U$ be a unitary that that satisfies $U\ket T\!\ket b = \ket T\! \ket{1-b}$ and $U\ket{\bar T}\!\ket b = \ket{\bar T}\!\ket b$ for $b\in\{0,1\}$. Then, given input parameters $\lambda', \delta\in(0,1)$, the fixed point search $\mathrm{FPS}(\mathcal C, U, \lambda',\delta)$ is a circuit on $n'+1$ qubits using $O(\log(1/\delta)/\lambda')$ calls to $\mathcal C, \mathcal C^\dagger, U$ and $O(n'^2\log(1/\delta)/\lambda')$ elementary gates such that the following hold\footnote{Note that unlike \cite{YoderLowChuang14}, where $\sqrt\lambda$ denotes the overlap, here we write the overlap as $\lambda$.}: 

\begin{lemma} \label{lem:FPSproperties}
	\begin{enumerate}[(i)]
		\item If $\lambda \geq  \lambda'$, then $|\!\bra{T,0}\mathrm{FPS}(\mathcal C, U, \lambda',\delta)\ket{0}^{\otimes (n'+1)}\!|^2 \geq 1-\delta^2$.
		\item If $\lambda\leq \lambda'$, then $|\!\bra{T,0}\mathrm{FPS}(\mathcal C, U, \lambda',\delta)\ket{0}^{\otimes( n'+1)}\!| \leq 2 \frac\lambda{\lambda'} \ln\frac2\delta $. 
	\end{enumerate}
\end{lemma}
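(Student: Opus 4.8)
The plan is to read the claim off from the fixed-point search construction of \cite{YoderLowChuang14}, adapting the $\sqrt\lambda$-convention used there to the $\lambda$-convention here. First I would note that $\mathrm{FPS}(\mathcal C,U,\lambda',\delta)$ acts nontrivially only on the two-dimensional subspace $\vspan\{\ket T,\ket{\bar T}\}$ (tensored with the flag qubit): $U$ phase-distinguishes $\ket T$ from $\ket{\bar T}$, and the reflections about $\mathcal C\ket 0$ preserve this subspace. Within it, the composite phase sequence of \cite{YoderLowChuang14} with $L=O(\lambda'^{-1}\log\delta^{-1})$ iterations produces a target amplitude satisfying
\begin{equation}
	\big|\bra{T,0}\mathrm{FPS}(\mathcal C,U,\lambda',\delta)\ket 0^{\otimes(n'+1)}\big|^2 = 1 - \delta^2\, T_L^2\!\big(\omega\sqrt{1-\lambda^2}\big),
\end{equation}
where $T_L$ is the Chebyshev polynomial of the first kind, $\omega:=T_{1/L}(1/\delta)$, and $L$ is fixed so that $\omega=1/\sqrt{1-\lambda'^2}$. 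Establishing this identity — in particular checking that this choice of $L$ is consistent with the claimed query count, using $\arccosh(1/\delta)\le\ln(2/\delta)$ together with $\arccosh\big(1/\sqrt{1-\lambda'^2}\big)=\operatorname{arctanh}(\lambda')\ge\lambda'$ to get $L\le\ln(2/\delta)/\lambda'$ — is the (bookkeeping-heavy) first step.

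Part (i) is then immediate. For $\lambda\ge\lambda'$, monotonicity of $\sqrt{1-\lambda^2}$ gives $\omega\sqrt{1-\lambda^2}\le\omega\sqrt{1-\lambda'^2}=1$, so $|T_L|\le 1$ on $[-1,1]$ forces $\delta^2 T_L^2\le\delta^2$ and hence a success probability of at least $1-\delta^2$.

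Part (ii) is the substantive part. For $\lambda\le\lambda'$ we have $u:=\omega\sqrt{1-\lambda^2}\ge 1$, so $T_L(u)=\cosh\!\big(L\arccosh u\big)$. Writing $s:=L\arccosh u$ and $s_0:=L\arccosh\omega=\arccosh(1/\delta)$, so that $\cosh s_0=1/\delta$ and $s\le s_0$, the amplitude squared factorises as $(1-\delta\cosh s)(1+\delta\cosh s)\le 2\delta(\cosh s_0-\cosh s)$. Using $\cosh s_0-\cosh s\le(s_0-s)\sinh s_0$ together with $\delta\sinh s_0=\sqrt{1-\delta^2}\le 1$ reduces the task to bounding $s_0-s=L\big(\arccosh\omega-\arccosh u\big)$. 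Differentiating shows $\arccosh\omega-\arccosh u(\lambda)=\int_0^\lambda t\,\dd t\big/\!\big(\sqrt{1-t^2}\sqrt{\lambda'^2-t^2}\big)$; since the bound is vacuous once $2\tfrac\lambda{\lambda'}\ln\tfrac2\delta\ge 1$, I may assume $\lambda\le\lambda'/2$, where for $t\in[0,\lambda]$ one has $\sqrt{\lambda'^2-t^2}\ge\tfrac{\sqrt3}2\lambda'$ and $\sqrt{1-t^2}\ge\tfrac{\sqrt3}2$, giving an integral of order $\lambda^2/\lambda'$. Hence $s_0-s=O\big(\tfrac{\lambda^2}{\lambda'^2}\log\tfrac1\delta\big)$ and the success amplitude is $O\big(\tfrac\lambda{\lambda'}\sqrt{\log\tfrac1\delta}\big)$, which sits comfortably below the claimed $2\tfrac\lambda{\lambda'}\ln\tfrac2\delta$ once constants are tracked.

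The main obstacle is (ii): unlike (i), the below-threshold bound is not the headline guarantee of \cite{YoderLowChuang14} and must be extracted from the explicit Chebyshev form of the amplitude. The two delicate points are (a) controlling the hyperbolic integral uniformly on all of $[0,\lambda']$ — the integrand's apparent blow-up as $\lambda\to\lambda'$ is harmless precisely because the target bound is already trivial there, so restricting to $\lambda\le\lambda'/2$ is legitimate — and (b) confirming that the single chosen $L$ simultaneously delivers the $1-\delta^2$ guarantee of (i), the normalisation $\omega=1/\sqrt{1-\lambda'^2}$, and the stated query count, which is exactly where the $\arccosh(1/\delta)\le\ln(2/\delta)$ estimate enters.
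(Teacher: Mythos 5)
Your proof is correct and, for part (ii), takes a genuinely different route from the paper's. Both arguments start from the same point, namely the explicit expression $P=1-\delta^2\,\mathcal T_{t}\bigl(\mathcal T_{1/t}(1/\delta)\sqrt{1-\lambda^2}\bigr)^2$ from \cite{YoderLowChuang14}, with part (i) read off directly; they diverge in how the below-threshold decay is extracted. The paper stays in the polynomial variable: writing $\lambda=\sin\theta$, it applies the mean value theorem to $\mathcal T_t(x)^2$ on $[\tau\cos\theta,\tau]$, uses $\mathcal T_t'=t\,\mathcal U_{t-1}$, and controls $\mathcal U_{t-1}(\tau)$ by $\mathcal T_t(\tau)$ via the expansion of $\mathcal U_t$ in Chebyshev polynomials of the first kind, arriving at $P\le 2t(t+1)\lambda^2$ with $t=\lceil\lambda'^{-1}\ln\frac2\delta\rceil$ and no case split on $\lambda/\lambda'$. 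You instead pass to hyperbolic coordinates, factor $1-\delta^2\cosh^2 s\le 2\delta(\cosh s_0-\cosh s)$ with $\delta\sinh s_0\le1$, and bound $s_0-s$ by the explicit integral $\int_0^\lambda t\,\dd t/\bigl(\sqrt{1-t^2}\sqrt{\lambda'^2-t^2}\bigr)$; this buys a quantitatively stronger conclusion, an amplitude of order $(\lambda/\lambda')\sqrt{\log(1/\delta)}$ rather than the paper's $(\lambda/\lambda')\log(1/\delta)$, at the price of the restriction $\lambda\le\lambda'/2$. One corner needs a one-line patch: your vacuousness reduction to $\lambda\le\lambda'/2$ uses $2\frac\lambda{\lambda'}\ln\frac2\delta\ge1$ at $\lambda=\lambda'/2$, i.e.\ $\ln\frac2\delta\ge1$, which fails for $\delta\in(2/e,1)$; in that regime note that $\mathcal T_t\ge1$ on $[1,\infty)$ gives the trivial amplitude bound $\sqrt{1-\delta^2}$, and one checks $\sqrt{1-\delta^2}\le\ln\frac2\delta$ there, so the claimed bound holds for all $\lambda\in(\lambda'/2,\lambda']$ anyway. (The paper's proof has a comparable gloss in its ``w.l.o.g.\ $t\ge5$'' step, so this is not a disadvantage specific to your route.) Finally, your consistency check that the exact normalisation $\omega=1/\sqrt{1-\lambda'^2}$ yields $L=\arccosh(1/\delta)/\arccosh\bigl(1/\sqrt{1-\lambda'^2}\bigr)\le\ln(2/\delta)/\lambda'$ correctly reconciles your choice of $L$ with the paper's ceiling-defined $t$ and the stated query count.
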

\begin{proof} 
	Part (i) is the central result proven in \cite{YoderLowChuang14}. To prove part (ii), let ${t }:=\lceil \frac1{\lambda'} \ln\frac2\delta\rceil$ be the number of calls to $\mathcal C$. From \cite[Eq (1)]{YoderLowChuang14},  the success probability $P$ can be expressed in terms of (generalised) Chebyshev polynomials $\mathcal T_t(x)$ of the first kind,
	\begin{equation}
	 P:= |\!\bra{T,0}\mathrm{FPS}(\mathcal C, U, \lambda',\delta)\ket{0}^{\otimes( n'+1)}\!|^2
		= 1- \delta^2 \mathcal T_{t } \left( \mathcal T_{1/{t }} (1/\delta)\sqrt{1-\lambda^2}\right)^2.
	\end{equation}
	W.l.o.g. assume that ${t }\geq 5$, so ${t }({t }+1)\leq 2\left(\frac1{\lambda'}\log\frac2\delta\right)^2$.  It thus suffices to prove that $P\leq 2{t }({t }+1)\lambda^2$. Write $\lambda = \sin\theta$ and $\tau := \mathcal T_{1/{t }}(1/\delta) = \cosh({t }^{-1}\arccosh(1/\delta))\geq 1$. Note that $\mathcal T_{t }(\tau) = 1/\delta$. Then, using the mean value theorem on the function $\mathcal T_{t }(x)^2$, we obtain
	\begin{align}
		P &= 1- \frac{\mathcal T_{t }(\tau\cos\theta)^2}{\mathcal T_{t }(\tau)^2} \\
		&= \tau(1-\cos\theta) \frac{ 2\mathcal T_{t }(\xi) \mathcal T_{t }'(\xi)}{\mathcal T_{t }(\tau)^2} \\
		&= {t }\tau(1-\cos\theta) \frac{ 2\mathcal T_{t }(\xi) \mathcal U_{{t }-1}(\xi)}{\mathcal T_{t }(\tau)^2}
	\end{align}	 
	for some $\xi\in[\tau\cos\theta,\tau]$, where we used that $\mathcal T_{t }'(x) = {t }\mathcal U_{{t }-1}(x)$. 
	Note that since $\tau\geq 1$ and $\xi\leq \tau$, $|\mathcal T_{t }(\xi)|\leq \mathcal  T_{t }(\tau)$ and $|\mathcal U_{{t }-1}(\xi)|\leq \mathcal U_{{t }-1}(\tau)$. Indeed, both $\mathcal T_{t }(x)$ and $\mathcal U_{{t }-1}(x)$ attain their maximum moduli on $[-1,1]$ at $x=1$, and are monotonically increasing on $[1,\infty)$. 
	Finally, using the relations 
	\begin{equation}
		\mathcal U_{t }(\tau) = 
			\begin{cases}
				1+ 2\sum_{k=1}^{{t }/2} \mathcal T_{2k}(\tau) & \text{${t }$ even} \\
				 2\sum_{k=0}^{({t }-1)/2} \mathcal T_{2k+1}(\tau) & \text{${t }$ odd} 
			\end{cases}
	\end{equation}
	and $0<\mathcal T_{k'}(\tau) \leq \mathcal T_k(\tau)$ for $k'\leq k$ and $\tau\geq 1$, we obtain
	\begin{equation}
		2\tau \mathcal U_{{t }-1} (\tau) = \mathcal U_{t }(\tau) + \mathcal U_{{t }-2}(\tau) \leq 2\mathcal U_{t }(\tau) \leq 2 ({t }+1) \mathcal T_{t }(\tau).
	\end{equation}
	Hence,
	\begin{align}
	 P &\leq 2{t }({t }+1) (1-\cos\theta ) \leq2{t }({t }+1)\sin^2\theta = 2{t }({t }+1)\lambda^2,
	\end{align}
	as claimed. 
\end{proof}

\begin{proof}[Proof of Proposition~\ref{prop:minFinding}]
	The algorithm proceeds by successively attempting to find the binary digits $a_1,\ldots,a_n\in\{0,1\}$ of an integer such that $2^{n-1}a_1+\cdots+a_n\in[\tilde J, J]$. The algorithm runs in two stages. In the first stage of the algorithm, suppose that $a_1,\ldots, a_{k-1}$ have already been found. To obtain $a_k$, we use fixed point search \cite{YoderLowChuang14} to search for $\ket{0}^{\otimes q}\ket{a_1\ldots a_{k-1}0}$ on the first $q+k$ qubits, using at most $O(\frac1\zeta\log\frac 1{\delta'})$ repetitions of $\mathcal C_\Phi$. We repeat this search $K$ times. We choose $\delta'=\delta/(2n\log(n/\delta))$ and $K=\lceil\log(n/\delta)\rceil$. If it succeeds all $K$ times, we set $a_{k}=0$ and move on to the next digit $a_{k+1}$ (unless $k=n$, in which case the algorithm terminates). Otherwise, we search $K$ times for $\ket{0}^{\otimes q}\ket{a_1\ldots a_{k-1}1}$ on the first $q+k$ qubits.  If we succeed all $K$ times, we set $a_k=1$ and move on to the next digit $a_{k+1}$ (unless $k=n$, in which case the algorithm terminates). 
	If we fail at least once, we say that the result is 'inconclusive', and we abort the first stage of the algorithm and move on. In the second stage of the algorithm, we successively search for $\ket{0}^{\otimes q}\ket{a_1\ldots a_{k-1}}, \ket{0}^{\otimes q}\ket{a_1\ldots a_{k-2}}$, etc., where each search is repeated $K$ times, using $O(\frac1\zeta\log \frac1{\delta'})$ calls to $\mathcal C_\Phi$, until we succeed. Once successful, we measure the remaining ancillas and the algorithm terminates. 
	
	We now show that this algorithm produces the required results. Let $J=2^{n-1} b_1+2^{n-2}b_2+\cdots+b_n$ and $\tilde J = 2^{n-1}c_1+2^{n-2}c_2+\cdots +c_n$ be the binary representations of $J,\tilde J$ respectively. Let $i_0$ be the maximum integer such that $b_l=c_l$ for all $l=1,\ldots,i_0$. Since $\tilde J < J$, it follows that $i_0<n$ and $b_{i_0+1} = 1, c_{i_0+1}=0$. 
	Moreover, we say that binary digits $a_1,\ldots,a_k$ are 'consistent' if they are the dominating $k$ binary digits of at least one integer in $[\tilde J,J]$. 
	
	We first show that the first stage of the algorithm finds at least $i_0$ consistent digits with probability at least $1-3\delta$. For a given $k$, the probability of finding the wrong $a_k$ given that $a_1,\ldots,a_{k-1}$ are consistent, is at most $\delta/n$. Indeed, finding too large a value means that $a_l=b_l$ for $l=1,\ldots,k-1$, $b_k=0$ and finding $a_{k}=1$. The probability of not finding $\ket{0}^{\otimes q}\ket{b_1\ldots b_k}$ $K$ times is $\delta'^K<\delta/n$ \cite{YoderLowChuang14}. 
	On the other hand, from Lemma~\ref{lem:FPSproperties}(ii), the probability of finding too small a value for a single trial of the search is upper bounded by 
	\begin{align}
		\frac{2\ln^2\frac2\delta}{\zeta^2}\sum_{j=0}^{\tilde J} \|\ket{\Phi_j}\|^2 &< \frac{\delta}{2n\log\frac n\delta} \label{eq:binSearchSumBound} \\
		&< \frac12.
	\end{align}
	 Thus, the probability of finding too small a value is at most $2^{-K}<\delta/n$.
	Moreover, for a given $k<i_0$, the probability of finding an inconclusive result is at most $\delta' K<\delta/n$. Thus, the probability of not finding at least $i_0$ consistent digits in the first stage of the algorithm is at most
	\begin{equation} \label{eq:binSearchProbFailStage1}
		n(\delta'^K + 2^{-K} + \delta'K) < 3\delta.
	\end{equation}
	
	Suppose now that we found $k\geq i_0$ consistent digits and then found an inconclusive result. Let $i_1\leq k$ be the largest integer such that $a_l=b_l$ for $l=1,\ldots,i_1$. Clearly, $i_1\geq i_0$. If $i_1<k$, then $a_{i_1+1}=0$ and $b_{i_1+1}=1$, since $a_1,\ldots,a_k$ are consistent. Thus, for all $l>i_1$, if the algorithm successfully finds $\ket{0}^{\otimes q}\ket{a_1\ldots a_l}$, then upon measuring the other ancillas, we obtain a value $j\leq J$. For a single trial of the search, the probability of finding a value $j<\tilde J$ is at most \eqref{eq:binSearchSumBound}. Thus, for a given $l$, the total probability of finding a value $j<\tilde J$ is at most $K$ times \eqref{eq:binSearchSumBound}, i.e. at most $\delta/n$. 

	The last possibility for the algorithm to fail is if the algorithm fails to find $\ket{a_1\ldots a_{i_1}}$ in the second stage. Let $p_{i_1}$ be the probability of a single run of the search finding $\ket{0}^{\otimes q}\ket{a_1\ldots a_{i_1}}$. Then, conditional on reaching this stage of the algorithm, the probability of failing is $(1-p_{i_1})^K$. Notice however that this event is conditional on first successfully finding $\ket{0}^{\otimes q}\ket{a_1\ldots a_{i_1}} $ $K$ times in the first stage of the algorithm. The probability of this event is $p_{i_1}^K$. Thus, the overall probability of the algorithm failing in this way is $(p_{i_1}(1-p_{i_1}))^K \leq 4^{-K}$. Thus, the probability of the algorithm failing in the second stage is at most
	\begin{equation}\label{eq:binSearchProbFailStage2}
		\delta + \sum_{i_1=1}^{n} (p_{i_1}(1-p_{i_1}))^K \leq \delta + n4^{-K} \leq 2\delta,
	\end{equation}
	where the first $\delta$ comes from the upper bound on the probability of finding $j<\tilde J$ in the second stage. 
	Thus, the total probability of the algorithm failing is at most \eqref{eq:binSearchProbFailStage1} plus \eqref{eq:binSearchProbFailStage2}, as claimed. The number of calls to $\mathcal C_\Phi$ and additional elementary gates is 
	\begin{equation}
		O\left(nK\frac1\zeta\log\frac1{\delta'}\right) = O\left(\frac{n}{\zeta} \log^2\frac n\delta\right),
	\end{equation}
	as claimed. 
	\end{proof}

Notice that if only the first $n'< n$ dominant binary digits of an integer in $[\tilde J,J]$ is needed, $n$ can be replaced with $n'$ in \eqref{eq:AppALem1Sum} and \eqref{eq:AppALem1Gates}. Moreover, it is clear that the $\zeta^2$ dependence in \eqref{eq:AppALem1Sum} is optimal. 

\subsection{Proof of Theorems~\ref{thm:GSEunknown}--\ref{thm:estimation}} \label{subsec:proofsUnknown}

With $M$ and $\tau$ defined as in Section~\ref{sec:known}, let $L = \Theta(\sqrt{M})$ be a power of $2$ and define $E_j := j / L$. Clearly, $E_{j+1}-E_j \in O(1/\sqrt{M})$. %We also define a distance from the $j$-th energy estimation to the ground energy 
Define $\delta_j := \lambda_0 - E_j$ and $H_j := \tilde{H} - (E_j - \tau) \id$. Let $J \in \{0, \ldots, L-1\}$ be the largest integer such that $E_J \leq \lambda_0$. Clearly, $\delta_J \in O(1/\sqrt{M})$. Using the algorithm from Section~\ref{sec:known}, we have unitaries $V_j$ such that
\begin{equation}\label{eq:Vj}
V_j \ket{0}^{\otimes(b+q)}\ket\phi = \ket0^{\otimes (b+q)}\mathcal G_j\ket\phi +  \ket{R},
\end{equation}
where  
\begin{equation}\label{eq:Gj}
\mathcal G_j=\frac1\alpha\sum_{k=-m_0}^{m_0}\alpha_k e^{-2iH_j k},
\end{equation}
and $(\ketbra00^{\otimes (b+q)} \otimes \id )\ket{R}=0$. 
The gate cost of a single $V_j$ is given by 
\begin{equation}\label{eq:costVj}
	\tilde O\left(\frac{\Lambda}{\Delta}\right).
\end{equation}

We now introduce an additional ancilla system $\mathcal L$ on $l:=\lceil\log_2 L\rceil$ qubits. 
Let $V$ be the $V_j$'s controlled on $\mathcal L$, i.e. $V = \sum_{j=0}^{L-1} \ketbra jj_{\mathcal L} \otimes V_j$. Then, 
\begin{equation}\label{eq:Vphi}
V\ket{+}^{\otimes l}_{\mathcal L} \ket{0}^{\otimes(b+q)}\ket\phi 
= \frac1{\sqrt L}\sum_{j=0}^{L-1}\ket j_{\mathcal L} \ket{0}^{\otimes (b+q)}\mathcal G_j \ket\phi + \ket R,
\end{equation}
where $(\id_{\mathcal L} \otimes \ketbra00^{\otimes (b+q)}\otimes\id)\ket R =0$. 
From \eqref{eq:cosFullApprox} and \eqref{eq:truncFourier}, 
\begin{equation} \label{eq:normGj}
	\left\| \frac{\mathcal G_j\ket\phi}{\| \mathcal G_j\ket\phi\|} -\ket{\lambda_0} \right\| = O(\epsilon),
\end{equation}
whenever $ E_j \leq  \lambda_0$, and
$
\| \mathcal G_j\ket\phi\| = \Theta\left(|\phi_0|\cos(\tau + \delta_j)^M\right)
$.
In particular, $\|\mathcal G_J\ket\phi\| = \Omega(|\phi_0|)$. 

We now show that we can use the minimum label finding algorithm  to project \eqref{eq:Vphi} onto the state $\ket j_{\mathcal L} \mathcal G_j \ket\phi / \|\mathcal G_j\ket\phi\|$ for some $j\in[\tilde J, J]$ and suitable $\tilde J$. Indeed, for any integer $\tilde J$,
\begin{align}
	\sum_{j=0}^{\tilde J} \|\mathcal G_j\ket\phi\|^2 &= O\left( \sum_{j=0}^{\tilde J}|\phi_0|^2 \cos(\tau + \delta_j)^{2M}\right) \\
	&= O\left( |\phi_0|^2\sum_{j=0}^{\tilde J} e^{-2(\tau + \delta_j)^2M/3}\right) \\
	&= O\left( |\phi_0|^2\sum_{j=0}^{\tilde J} e^{-4\delta_j\tau M/3}\right)\\
	&= O\left( |\phi_0|^2\sum_{j=0}^{\tilde J} e^{-4(\lambda_0-j/L)\tau M/3}\right)\\
	&= O\left( |\phi_0|^2 e^{-4(\lambda_0-\tilde J/L)\tau M/3}\right)\\
	&= O\left( |\phi_0|^2 e^{-4\delta_{\tilde J}\tau M/3}\right)\\
	&= O\left( |\phi_0|^2 e^{-\Theta(\delta_{\tilde{J}}\sqrt{M})}\right), 
\end{align}
where the last step follows from %
\eqref{eq:choicetau}. 
Thus, since $\chi$ is a known lower bound on $|\phi_0|$ satisfying $\log(|\phi_0|/\chi) = O(\log N)$, we can ensure that
\begin{equation}
	\sum_{j=0}^{\tilde J} \|\mathcal G_j\ket\phi\|^2 =O\left(\frac{\chi^2}{\log M}\right),
\end{equation}
provided that
\begin{equation} \label{eq:deltaTildeJ}
	\delta_{\tilde J}= \Theta\left(\frac{1}{\sqrt{M}} \log \frac{|\phi_0|^2 \log M}{\chi^2}\right) = \tilde \Theta(\Delta).
\end{equation}
Thus, by  Proposition~\ref{prop:minFinding}, we can with high probability prepare the state $\ket j_{\mathcal L} \mathcal G_j \ket\phi / \|\mathcal G_j\ket\phi\|$ for some $j\in[\tilde J, J]$ using $\tilde O(\sqrt L/\chi)$ repetitions of $V$ and $\mathcal C_\phi$. The value in the $\mathcal L$ register gives an estimate of the ground energy to a precision of \eqref{eq:deltaTildeJ}. From \eqref{eq:normGj}, the second register then contains the desired approximation of the ground state.

Notice that $V$ can be implemented with essentially the same cost as a single $V_j$. 
Indeed, the only explicit dependence of $V_j$ on $j$ is in the call to Hamiltonian simulation, $e^{\pm 2iH_j}= e^{\pm 2i(\tau-E_j)} e^{\pm 2i\tilde H}$. Thus, given access to $e^{\pm 2i\tilde H}$, it is easy to implement 
\begin{equation}
	\sum_{j=0}^{L-1}\ketbra jj \otimes e^{\pm 2i H_j} = \sum_{j=0}^{L-1}\ketbra jj \otimes e^{ \pm 2i (\tau-E_j)} e^{ \pm 2i \tilde H}
\end{equation}
with a single call to $e^{\pm 2i\tilde H}$ and $O(\log L)=\tilde O(1)$ additional elementary gates implementing the phases $e^{\pm2 i(\tau-E_j)}$. 
Hence, the overall gate cost of this algorithm is 
\begin{equation}
	\tilde O\left(\frac{\sqrt L}\chi \left(\frac\Lambda{\Delta} + \Phi\right)\right) = \tilde O\left(\frac{\Lambda}{\chi\Delta^{3/2}} + \frac\Phi{\chi\sqrt\Delta}\right),
\end{equation}
 as claimed. This proves Theorem~\ref{thm:GSEunknown}.

%%%%%

Alternatively, one can combine this algorithm with the prior use of phase estimation. This approach improves the scaling with $\Delta$ at the cost of a worse scaling in $\chi$, and is hence useful if $\Delta$ is very small. 

First we use the method from Appendix~\ref{app:PEAenergy} to obtain a ``crude'' estimate of the ground energy, to a precision of $\xi=O(\Delta^\kappa)$ for some $\kappa\in[0,1]$ to be chosen later. The runtime of this is 
\begin{equation}\label{eq:unknownPEACost}
 \tilde O\left(\frac{\Lambda}{\chi^3\Delta^\kappa} + \frac{\Phi}\chi\right)
\end{equation}
gates (see Appendix~\ref{app:PEAenergy}). 
This provides us with an interval $[a,b]\ni \lambda_0$ with $b-a = O(\Delta^{\kappa})$. Let $ E_j' = a+(b-a)j/L'$, where $L'=\Theta(M\Delta^{\kappa})$. 
Writing $H_j' = \tilde H - (E_j'-\tau)\id$ and 
\begin{equation}
	\mathcal G_j' = \frac1\alpha \sum_{k=0}^{m_0} \alpha_k e^{-2iH_j'k},
\end{equation}
we run the previous algorithm but with
\begin{equation} \label{eq:Vprimephi}
 V'\ket+^{\otimes l'}_{\mathcal L'}\ket0^{\otimes(b+q)} \ket \phi = \frac1{\sqrt {L'}} \sum_{j=0}^{L'-1}\ket j_{\mathcal L'}\ket 0^{\otimes (b+q)}\mathcal G_j'\ket\phi + \ket {R}
\end{equation}
instead of \eqref{eq:Vphi}, and an ancilla system $\mathcal L'$ on $l'=\lceil \log_2 L'\rceil$ qubits instead of $\mathcal L$. 
The algorithm now requires $O(\sqrt{L'}/\chi)$ repetitions of $V'$. As before, the cost of $V'$ is given by \eqref{eq:costVj}. Hence, the number of gates for projecting \eqref{eq:Vprimephi} onto the target state is
\begin{equation}\label{eq:unknownPowerCost}
 \tilde O\left(\frac{\sqrt{L'}}{\chi}\left( \frac\Lambda{\Delta} + \Phi\right)\right)
 = \tilde O\left( \frac{\Lambda}{\chi \Delta^{(3-\kappa)/2}} + \frac{\Phi}{\chi\Delta ^{(1-\kappa)/2}} \right).
\end{equation}
The total number of gates for the algorithm is thus \eqref{eq:unknownPowerCost} plus the gate cost \eqref{eq:unknownPEACost} from the prior use of phase estimation, i.e.,
\begin{equation}
 \tilde O\left( \frac{\Lambda}{\chi^3\Delta^\kappa} + \frac{\Lambda}{\chi \Delta^{(3-\kappa)/2}} + \frac{\Phi}{\chi\Delta ^{(1-\kappa)/2}} \right).
\end{equation}
This proves Theorem~\ref{thm:combined}.

It is easy to see that both algorithms can be used to estimate the ground energy to an arbitrarily small precision $\xi$, provided that $\xi$ is less than \eqref{eq:deltaTildeJ}. Indeed, the entire algorithm has $\Delta$ as an input parameter, which is only required to be a lower bound  on the spectral gap rather than the spectral gap itself. Hence, if we run the algorithm with $\Delta'\ll \Delta$, we obtain an estimate of the ground state to a precision of $\xi=\tilde O(\Delta')$. The algorithm can thus be used to estimate the ground state to an arbitrary precision $\xi < \Delta$ in a gate complexity of 
\begin{equation}
\tilde	O\left( \frac{\Lambda}{\chi\xi^{3/2}} + \frac\Phi{\chi\sqrt\xi}\right),
\end{equation}
for the first algorithm, and 
	\begin{equation} 
 	\tilde O\left( \frac{\Lambda}{\chi^3\xi^\kappa} + \frac{\Lambda}{\chi \xi^{(3-\kappa)/2}} + \frac{\Phi}{\chi\xi^{(1-\kappa)/2}} \right)
	\end{equation}
for the combined approach. This proves Theorem~\ref{thm:estimation}. \qed 

Note that choosing $\kappa=1$ in the combined algorithm gives the optimal scaling of $\sim 1/\Delta$ with $\Delta$. However, it is obviously also possible to choose different values of $\kappa$ that also take the other parameters into account.

\section{Conclusion} \label{sec:conclusion}

In this paper, we presented quantum algorithms to prepare the ground state of a Hamiltonian faster and with fewer qubits than with previous methods, both in the case of known and unknown ground energy. In the latter scenario, the algorithm also provides a high precision estimate of the ground energy in a complexity faster than with phase estimation and amplitude amplification. 

Perhaps surprisingly, the straightforward use of phase estimation and amplitude amplification has a significantly worse scaling in the overlap of the trial state with the ground state than what one would naively expect.  In Appendix~\ref{app:PEApreparation}, we show that straightforward phase estimation requires
\begin{equation} \label{eq:PEAgatesPrep}
\tilde O\left(\frac  {\Lambda}{|\phi_0|^2 \Delta\epsilon}+\frac{\Phi}{|\phi_0|}\right)
\end{equation}
gates to prepare the ground state, provided the ground energy is known to a precision of $O(|\phi_0|\epsilon\Delta)$ beforehand. Notice in particular the $ 1/|\phi_0|^2$ scaling, even after using amplitude amplification. The scaling becomes even worse if the ground energy is not known beforehand (see Table~\ref{tab:CostUnknown}a). 

Previous improvements by Poulin and Wocjan \cite{PoulinWocjan09} to the phase estimation approach quadratically improved the dependence on $|\phi_0|$. Moreover, we prove in Appendix~\ref{app:PoulinWocjan} that for suitable choices of parameters, this algorithm can prepare the ground state 
in the same runtime as our algorithm. 
The algorithms in this paper however use significantly fewer qubits, which make them more attractive for early applications of quantum computing.

In the case of known ground energy, no non-trivial prior knowledge of the value of $|\phi_0|$ is required for any algorithm discussed in this paper (although for the filtering method, the number of qubits required becomes worse if no non-trivial lower bound is known). 
However, in case of unknown ground energy, the runtime dependences on $|\phi_0|$ are replaced by dependences on the known lower bound $\chi$ of $|\phi_0|$. There appears to be a systematic reason for this which seems difficult to overcome: broadly speaking, all methods discussed in this paper produce a state of the form $\sum_j\ket{E_j}\ket{\Phi_j}$, where $\ket{\Phi_j}$ is approximately proportional to the ground state and $\|\ket{\Phi_j}\| \approx |\phi_0|$ if $E_j\approx\lambda_0$, and $\|\ket{\Phi_j}\|\ll |\phi_0|$ if $E_j\ll \lambda_0$. Any search for the ``correct'' value of $j$ needs to be able to distinguish the following two cases: (i)  $\lambda_0=\lambda_0^{(1)}$ and $\phi_0=\phi_0^{(1)}$, and (ii)  $\lambda_0=\lambda_0^{(2)}$ and $\phi_0=\phi_0^{(2)}$, where $\lambda_0^{(1)} \ll \lambda_0^{(2)}$ and $|\phi_0^{(1)}| \ll |\phi_0^{(2)}|$. Any use of amplitude amplification to distinguish these cases would require many, i.e. $O\left(1/|\phi_0^{(1)}|\right)$ repetitions to ``see'' if (i) is the case, which means that if in fact (ii) is the case, significantly more repetitions than $O\left(1/|\phi_0^{(2)}|\right)$ have been performed. On the other hand, if only  $O\left(1/|\phi_0^{(2)}|\right)$  repetitions are performed, one would accidentally find a much larger value (e.g. $\lambda_0^{(2)}$) than the true ground energy $\lambda_0^{(1)}$ (and hence also prepare the wrong state) if in fact (i) is the case. We currently do not see a way to get around this problem. 

For simplicity of notation, we assumed throughout this paper that the ground energy of $\tilde H$ is non-degenerate. The algorithm however trivially generalises to degenerate ground spaces. Indeed, the only thing that changes in this case is that the resulting state $\ket{\lambda_0'}$ is $\epsilon$-close to $\ket{\lambda_0} = P\ket\phi/\|P\ket\phi\|$, where $P$ is the projector onto the ground space of $\tilde H$. The algorithm also generalises to the case of only approximately degenerate ground spaces, i.e. when the spectrum of $\tilde H$ is contained in $[\lambda_0,\lambda_0+\varepsilon] \cup [\lambda_0+\Delta,1]$ with $\varepsilon \ll \Delta$.

Finally, given the close relation of this work to \cite{ChildsKothariSomma15}, it is also natural to expect that the alternative approach of \cite{ChildsKothariSomma15} using Chebyshev polynomials and quantum walks can be used to get a different algorithm with the same runtime scaling. 
This is indeed possible, as we show in Appendix~\ref{app:Chebyshev}. These algorithms however are restricted to the case where $\tilde H$ is a sparse Hamiltonian with quantum oracle access. 
 By contrast, the advantage of the approach based on Hamiltonian simulation is that it can be applied outside the framework of given oracle access to $\tilde H$, as efficient Hamiltonian simulation algorithms exist for other models of accessing the Hamiltonian \cite{LowChuang16,BCCKS15}.

\acknowledgments
 YG thanks H.~Buhrman, V.~Dunjko, A.~Gily\'en, D.~Gosset, R.~Kothari, G. Low, A.~Moln\'ar, A.~Montanaro, and N.~Schuch for helpful discussions. 
This work was supported by the ERC grant QUENOCOBA, ERC-2016-ADG, grant no. 742102. 
JT acknowledges funding from the EU's Horizon 2020 research and innovation programme under the Marie
Sk{\l}odowska-Curie grant agreement No 748549.

%\bibliography{library}
%merlin.mbs apsrev4-1.bst 2010-07-25 4.21a (PWD, AO, DPC) hacked
%Control: key (0)
%Control: author (8) initials jnrlst
%Control: editor formatted (1) identically to author
%Control: production of article title (-1) disabled
%Control: page (0) single
%Control: year (1) truncated
%Control: production of eprint (0) enabled
%

\appendix

\section{Finding the ground energy using phase estimation} \label{app:PEAenergy}

We now demonstrate how to combine phase estimation and the minimum label finding algorithm to find the unknown ground energy of a quantum Hamiltonian to a precision of $\xi$ in a gate complexity of \begin{equation} \label{eq:PEAEstGates}
 \tilde O\left( \frac\Lambda{\chi^3\xi} +\frac\Phi\chi \right). 
\end{equation}%
Notice that straightforward amplitude amplification cannot be used to amplify the ground energy since the latter is not known.

Let $U$ be an $N\times N$ unitary with eigenvectors $\ket{\lambda_i}$ and eigenvalues $e^{2\pi i \lambda_i}$ with $\lambda_i\in[0,1)$. We are interested in finding a good approximation  of the minimum value $\lambda_0$, say to $n = \lceil\log_2 1/\xi\rceil$ binary digits of precision. %

Let $\ket\phi = \sum_i\phi_i \ket{\lambda_i}$ be an arbitrary trial state. Suppose that we are given a circuit $\mathcal C_\phi$ 
which prepares $\ket\phi$ using $\Phi$ elementary gates.
Recall that the phase estimation algorithm  \cite{Kitaev95} takes $\ket\phi  \ket 0^{\otimes k}$ for some $k>n$ to
\begin{equation} \label{eq:PEAresult} 
	\ket\Phi:=\sum_i \phi_i\ket{\lambda_i} \ket{\tilde\varphi_i}
\end{equation}
using controlled $U,U^2,\ldots U^{2^{k-1}}$ and an inverse quantum Fourier transform on $k$ qubits,
where $\ket{\tilde\varphi_i}$ can be shown to have a large overlap with the computational basis state that encodes the first $n$ binary digits of $\lambda_i$. 

Let 
\begin{equation}
	\ket{\tilde\varphi_i} = \sum_{x=0}^{2^k-1} \gamma_{ix}\ket x
\end{equation}
be its expansion in the computational basis. 
Then, 
\begin{equation}
	\ket\Phi = \sum_{x=0}^{2^k-1} \ket{\Phi_x}\ket x,
\end{equation}
where 
\begin{equation}
\ket{\Phi_x} = \sum_{i} \phi_i \gamma_{ix} \ket{\lambda_i}
\end{equation}
are non-normalised states.

Naively, one would expect that simply measuring the ancillas would give a good approximation $x\approx 2^k\lambda_0$ of the ground energy with probability $\approx |\phi_0|^2$, so that $O(1/|\phi_0|^2)$ repetitions would be needed to find the ground energy. One would also expect that this could be quadratically reduced to $O(1/|\phi_0|)$ using suitable amplitude amplification techniques. However, we show below that the overall gate cost is much higher if the ground energy is not known beforehand. Indeed, due to the finite precision and error in the phase estimation algorithm, $\|\ket{\Phi_x}\|^2$ could in principle be large even when $x\ll 2^k\lambda_0$. This means that the algorithm could fail by  accidentally finding a value that is much smaller than the true ground energy. We show below that in order to guarantee that this does not happen, one needs to choose $2^k = \tilde O(2^n /{|\phi_0|^2})$. Since the runtime of phase estimation scales linearly with $2^k$, and only a lower bound $\chi$ on $|\phi_0|$ is assumed to be known, this would lead to an overall runtime that scales as $\sim 1/{\chi^3}$. We also demonstrate at the end of this  section that this dependence is essentially tight.

\begin{proposition}[Ground energy estimation with phase estimation] \label{prop:PEAEst}
	Phase estimation can be used to find the value of $\lambda_0$ to an additive precision of $\xi$ in a gate complexity of 
	\eqref{eq:PEAEstGates}, and 	using
	$
	 O(\log N+\log 1/{\xi})
	$
	qubits. 
\end{proposition}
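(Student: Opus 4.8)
The plan is to run phase estimation at a precision $2^{-k}$ much finer than the target $\xi$, and then feed the resulting state \eqref{eq:PEAresult} into the minimum label finding algorithm (Proposition~\ref{prop:minFinding}) to locate the smallest label carrying appreciable weight, which should correspond to $\lambda_0$. Since the number $n=\lceil\log_2 1/\xi\rceil$ of binary digits we actually need is smaller than $k$, I would group the bottom $k-n$ phase bits together with the system register into the ``data'' and treat only the top $n$ bits as the label, defining the binned states $\ket{\Psi_y}:=\sum_{z=0}^{2^{k-n}-1}\ket z\ket{\Phi_{2^{k-n}y+z}}$ for $y\in\{0,\dots,2^n-1\}$, so that the phase estimation output reads $\sum_y\ket y\ket{\Psi_y}$ after reordering. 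The entire difficulty is to choose $k$ large enough that no label $y$ well below the ground-energy bin accidentally accumulates weight $\gtrsim\chi$, which would make the minimum label finding return a value far below $\lambda_0$.

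To control this, I would use the standard phase estimation tail bound $|\gamma_{ix}|^2\le 1/(4(2^k\lambda_i-x)^2)$, valid when $x$ lies within $2^{k-1}$ of $2^k\lambda_i$. Since every eigenvalue satisfies $\lambda_i\ge\lambda_0$, for any $x<2^k\lambda_0$ one has $2^k\lambda_i-x\ge 2^k\lambda_0-x>0$, and hence $\|\ket{\Phi_x}\|^2=\sum_i|\phi_i|^2|\gamma_{ix}|^2\le \frac14(2^k\lambda_0-x)^{-2}$. Summing this over $x$ below a threshold $\tilde X$ and comparing with an integral gives $O\bigl(1/(2^k\lambda_0-\tilde X)\bigr)$. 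The one subtlety is the cyclic (mod $1$) nature of the phases: to ensure that eigenvalues near $1$ do not contribute spurious weight at small $x$ through wraparound, I would first rescale $\tilde H$ so that all the $\lambda_i$ lie in an interval bounded away from $1$ (e.g.\ taking $U=e^{i\pi\tilde H}$ so $\lambda_i\in[0,\frac12]$), which costs only a constant factor in $\xi$.

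Taking the threshold $\tilde J$ to be a constant number of bins below the bin $J=\lfloor 2^n\lambda_0\rfloor$ containing $\lambda_0$, the corresponding $\tilde X$ satisfies $2^k\lambda_0-\tilde X=\Omega(2^{k-n})=\Omega(2^k\xi)$, so the binned tail obeys $\sum_{y\le\tilde J}\|\ket{\Psi_y}\|^2=O(1/(2^k\xi))$. On the other hand, standard concentration of phase estimation gives $\|\ket{\Psi_J}\|^2=\Omega(|\phi_0|^2)\ge\Omega(\chi^2)$ (splitting into at most two adjacent bins if $\lambda_0$ sits near a boundary, in which case I pick $J$ to be the heavier one), so I may take $\zeta=\Theta(\chi)$. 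Choosing $2^k=\tilde\Theta\bigl(1/(\chi^2\xi)\bigr)$ then makes the tail smaller than the right-hand side of \eqref{eq:AppALem1Sum}, and Proposition~\ref{prop:minFinding} returns some $j\in[\tilde J,J]$, whose value $j\,2^{-n}$ is within $O(\xi)$ of $\lambda_0$ with probability $\ge 1-5\delta$.

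Finally I would collect the cost. One invocation of the phase estimation circuit $\mathcal C_\Phi$ uses controlled $U,U^2,\dots,U^{2^{k-1}}$, i.e.\ $O(2^k)$ applications of $U$ at cost $\tilde O(\Lambda)$ each, plus one call to $\mathcal C_\phi$, giving $\tilde O(2^k\Lambda+\Phi)=\tilde O(\Lambda/(\chi^2\xi)+\Phi)$ gates. By \eqref{eq:AppALem1Gates} with $n$ in place of the full label length, the minimum label finding makes $\tilde O(n/\zeta)=\tilde O(1/\chi)$ calls to $\mathcal C_\Phi$, for a total of $\tilde O(\Lambda/(\chi^3\xi)+\Phi/\chi)$ gates, matching \eqref{eq:PEAEstGates}. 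The qubit count is $k+\log N+O(1)=O(\log N+\log 1/\xi)$, where I used $\log 1/\chi=O(\log N)$ to absorb $k$. I expect the main obstacle to be the tail estimate of the second paragraph, both the wraparound issue and the need to lower-bound $\|\ket{\Psi_J}\|$ cleanly when $\lambda_0$ falls near a bin boundary; everything else is bookkeeping of known bounds.
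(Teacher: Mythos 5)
Your proposal is correct and follows essentially the same route as the paper's proof: the same tail bound \eqref{eq:upperboundgamma} summed below a threshold $D=2^{k-n}$, the same single-point lower bound \eqref{eq:lowerboundgamma} giving $\zeta=\Theta(\chi)$, the same choice $2^k=\tilde\Theta\left(1/(\chi^2\xi)\right)$, and Proposition~\ref{prop:minFinding} applied to the $n$ dominant phase bits (which the paper invokes via its remark that only the first $n'<n$ digits are needed), yielding the identical cost accounting $\tilde O(\Lambda/(\chi^3\xi)+\Phi/\chi)$ and $O(\log N+\log 1/\xi)$ qubits. Your explicit handling of the mod-$1$ wraparound (rescaling the spectrum into $[0,1/2]$) and of bin-boundary cases tightens two details the paper leaves implicit, but does not change the argument.
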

\begin{proof}
Recall first the well-known  \cite{NielsenChuang} relations
\begin{equation}
	\gamma_{ix} = \frac1{2^k} \left( \frac{1-e^{2\pi i (2^k\lambda_i - x)}}{1-e^{2\pi i (\lambda_i - x/2^k)}} \right)
\end{equation}
and
\begin{equation} \label{eq:upperboundgamma}
	|\gamma_{ix}| \leq \frac1{2 |2^k\lambda_i - x|}.
\end{equation}
Let $D:=2^{k-n}$. 
Then, using  \eqref{eq:upperboundgamma}, we have
\begin{align}
  \sum_{x < 2^k\lambda_0 -D} \|\ket{\Phi_x}\|^2 
  &=  \sum_{\substack{i\\  x < 2^k\lambda_0 - D}} |\phi_i|^2|\gamma_{ix}|^2 \\
  &\leq \sum_{\substack{i\\  x < 2^k\lambda_0 - D}} |\phi_i|^2\left|\frac1{2 |2^k\lambda_i - x|}\right|^2 \\
  &\leq \sum_{\substack{i\\  x < 2^k\lambda_0 - D}} |\phi_i|^2\left|\frac1{2 |2^k\lambda_0 - x|}\right|^2  \\
  &\leq \sum_{ x < 2^k\lambda_0 - D} \left|\frac1{2 |2^k\lambda_0 - x|}\right|^2 \\
  &< \sum_{y>D} \frac{1}{y^2} < \frac 1D.
\end{align}

Moreover, when $x$ satisfies $|\lambda_i-x/2^k| < 1/2^{k+1}$, 
\begin{equation} \label{eq:lowerboundgamma}
	|\gamma_{ix} | \geq \frac2{\pi},
\end{equation}
using that 
\begin{equation}\label{eq:thetabound}
	|\theta|\geq |1-e^{i\theta}| \geq 2|\theta|/\pi
\end{equation}
for $\theta\in[-\pi,\pi]$. Thus,
\begin{equation}
	\left\|\ket{\Phi_{\lfloor 2^k\lambda_0\rceil}} \right\| \geq \frac2\pi |\phi_0|.
\end{equation}

Let $\chi$ be a known lower bound on $|\phi_0|$. Using Proposition~\ref{prop:minFinding}, we can thus find an integer $x\in [ \lfloor 2^k\lambda_0\rceil - D, \lfloor 2^k\lambda_0\rceil ]$ with good probability, provided that $1/D=\tilde O(\chi^2)$. 
The number of calls to phase estimation is $\tilde O(1/\chi)$. The number of gates required for each run of phase estimation is $\tilde O(2^k\Lambda)$. Thus, 
taking $U=e^{-2\pi i\tilde H}$ and $\xi=2^{-n}$, we arrive at  
a total gate count of \eqref{eq:PEAEstGates}. Moreover, the number of qubits required is $O(\log N + k) = O(\log N  + \log 1/\xi)$. 
\end{proof}

It is  not difficult to show 
that for suitable $ \tilde H$ and $\ket\phi$, the dependence on $\chi$ of this algorithm is optimal. 
Indeed, suppose 
that $\tilde H = \frac1{2^k}(H'+c\id)$, where $c\in(0,1/2)$ is a constant and $H'$ is any Hamiltonian with integer spectrum in $\{2^{k-2},\ldots,2^{k-1}-1\}$. 
Then from \eqref{eq:thetabound},
\begin{equation}
	|\gamma_{ix}| \geq \frac{2c}{\pi|2^k\lambda_i-x|}.
\end{equation}
Suppose  that our trial state  $\ket\phi$ is such that $|\phi_1|^2\geq 1/2$.
Then, 
\begin{align}
 \sum_{x < 2^k\lambda_0 -D} \|\ket{\Phi_x}\|^2   
  &=  \sum_{\substack{i\\  x < 2^k\lambda_0 - D}} |\phi_i|^2|\gamma_{ix}|^2 \\
  &\geq \frac c\pi \sum_{  x < 2^k\lambda_0 - D}\frac1{|2^k\lambda_1 - x|^2} \\
  &= \Omega\left(\int_{D+2^k\Delta}^{2^k\lambda_1} \frac1{x^2}\mathrm{d}x \right) \\
  &=\Omega\left( \frac1{2^k}\left(\frac1{\xi+\Delta}- \frac1{\lambda_1} \right)\right).
\end{align}
Suppose  now that $\Delta = O(\xi)$ and $\xi = o(1)$. Then
\begin{equation}
	\sum_{x < 2^k\lambda_0 -D} \|\ket{\Phi_x}\|^2  = \Omega\left(\frac1{2^k\xi}\right),
\end{equation}
which is much larger than $\chi^2$ unless $2^k = \Omega\left(\frac1{\chi^2\xi}\right)$, and the claim follows.

\section{Preparing the ground states using phase estimation} \label{app:PEApreparation}

Naively, one would  expect that upon successfully projecting the ancillas of \eqref{eq:PEAresult}  into the first $n$ (or even $k$) digits of $\lambda_0$, the residual state should be a good approximation of $\ket{\lambda_0}$, provided that $\xi =\tilde O( \Delta)$. Unfortunately, this is not true, because the ``imperfections'' in $\ket{\tilde\varphi_i}$ build up to a non-negligble error in the residual state. This phenomenon was also observed in \cite{PoulinWocjan09}. 
Below, we analyse these errors and show a non-negligible bound on the minimum precision to which the ground energy needs to be known beforehand.

\begin{proposition}[Ground state preparation with phase estimation for known ground energy]
	Suppose that the value of $\lambda_0$ is known to a precision of $O(|\phi_0|\epsilon\Delta)$. Then phase estimation can be used to prepare a state $\epsilon$-close to the ground state in a gate complexity of
	\begin{equation}\label{eq:PEAknownPrepGates}
	 O \left(\frac\Lambda{|\phi_0|^2\Delta\epsilon}+\frac\Phi{|\phi_0|}\right),
	\end{equation}	 
	and using $O(\log N + \log 1/\epsilon + \log 1/\Delta)$ qubits. 
\end{proposition}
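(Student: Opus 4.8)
The plan is to run the phase estimation circuit of Appendix~\ref{app:PEAenergy} with a sufficiently large number $k$ of ancilla qubits, obtaining $\ket\Phi=\sum_x\ket{\Phi_x}\ket x$ with $\ket{\Phi_x}=\sum_i\phi_i\gamma_{ix}\ket{\lambda_i}$, and then to \emph{project} the ancilla register onto the single computational basis state $\ket{x_0}$, $x_0:=\lfloor 2^k\lambda_0\rceil$. Because $\lambda_0$ is assumed known, $x_0$ is known, so this projection can be boosted by amplitude amplification \cite{BHMT02} reflecting about the ancilla subspace $\ket{x_0}$. Upon success the normalised residual state is $\ket{\Phi_{x_0}}/\|\ket{\Phi_{x_0}}\|$, and the task reduces to showing that, for an appropriate choice of $k$, this state is $\epsilon$-close to $\ket{\lambda_0}$.

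First I would separate the ``good'' component $\phi_0\gamma_{0x_0}\ket{\lambda_0}$ from the ``bad'' component $\sum_{i\neq 0}\phi_i\gamma_{ix_0}\ket{\lambda_i}$, which is orthogonal to $\ket{\lambda_0}$. Since $x_0$ is the nearest integer to $2^k\lambda_0$, Eq.~\eqref{eq:lowerboundgamma} gives $|\gamma_{0x_0}|\geq 2/\pi$, so the good amplitude is $\Omega(|\phi_0|)$. For the bad component, $i\neq 0$ forces $\lambda_i\geq\lambda_0+\Delta$, whence $|2^k\lambda_i-x_0|\geq 2^k\Delta-\tfrac12$; inserting this into the upper bound \eqref{eq:upperboundgamma} and summing over $i$ with $\sum_i|\phi_i|^2=1$ yields a bad-component norm of $O(1/(2^k\Delta))$. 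The distance of the normalised residual state to $\ket{\lambda_0}$ (up to an irrelevant global phase) is then $O(1/(|\phi_0|2^k\Delta))$, which is $\leq\epsilon$ as soon as $2^k=\Omega(1/(|\phi_0|\Delta\epsilon))$, i.e. $k=O(\log\tfrac1{|\phi_0|}+\log\tfrac1\Delta+\log\tfrac1\epsilon)$.

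It then remains to account for the cost. The success probability of the projection is $\|\ket{\Phi_{x_0}}\|^2=\Omega(|\phi_0|^2)$, so amplitude amplification needs $O(1/|\phi_0|)$ rounds, each comprising one call to $\mathcal C_\phi$ (and its inverse) at cost $\Phi$ together with one run of phase estimation, whose controlled powers $U,U^2,\ldots,U^{2^{k-1}}$ with $U=e^{-2\pi i\tilde H}$ cost $O(2^k\Lambda)$ up to polylogarithmic factors. Multiplying gives the stated gate count $O(\Lambda/(|\phi_0|^2\Delta\epsilon)+\Phi/|\phi_0|)$. The ancilla count is $k$ phase qubits on top of the $O(\log N)$ system qubits; since $|\phi_0|\geq\chi=e^{-O(\log N)}$ absorbs $\log(1/|\phi_0|)$ into $\log N$, the total is $O(\log N+\log\tfrac1\epsilon+\log\tfrac1\Delta)$, as claimed.

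Finally, the advertised precision requirement emerges naturally: to single out the individual integer $x_0$ one must know $2^k\lambda_0$ to within $\tfrac12$, i.e. know $\lambda_0$ to precision $2^{-k}=O(|\phi_0|\epsilon\Delta)$. I expect the main obstacle to be the residual-state error analysis rather than the cost bookkeeping: one must verify that the tails of the excited-state peaks $\ket{\tilde\varphi_i}$ that accumulate at $x_0$ are genuinely controlled by the single bound \eqref{eq:upperboundgamma}, and confirm that projecting onto one value of $x$ (rather than a wider window, which would relax the precision requirement but force a larger $2^k$ and hence a strictly worse runtime) is the choice that produces the stated scaling.
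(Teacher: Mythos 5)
Your proposal is correct and takes essentially the same route as the paper's proof in Appendix~\ref{app:PEApreparation}: post-select the phase-estimation ancillas on $\ket{\lfloor 2^k\lambda_0\rceil}$, lower-bound the good amplitude via \eqref{eq:lowerboundgamma} and upper-bound the orthogonal component via \eqref{eq:upperboundgamma} to get a residual error $O\left(1/(|\phi_0|2^k\Delta)\right)$, choose $2^k=\Theta\left(1/(|\phi_0|\epsilon\Delta)\right)$, and amplify with $O(1/|\phi_0|)$ rounds at cost $\tilde O(2^k\Lambda+\Phi)$ each. The only cosmetic difference is that the paper invokes fixed-point search where you use standard amplitude amplification, which changes nothing in the stated gate and qubit counts.
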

\begin{proof}
Suppose that $z:= \lfloor 2^k\lambda_0\rceil$, i.e. $z$ encodes the first $k$ binary digits of $\lambda_0$ ($k$ will be specified below) 
and we apply phase estimation on $k$ qubits. Then, post-selecting the ancillas of \eqref{eq:PEAresult} to be in $\ket{z}$, the residual state is
\begin{equation}
	\ket{\lambda} := \frac{\sum_i \phi_i \gamma_{iz} \ket{\lambda_i} }{\left\|\sum_i \phi_i \gamma_{iz} \ket{\lambda_i}\right\|}.
\end{equation}
Thus, the error in the residual state is
\begin{equation}
\|\ket\lambda - \ket{\lambda_0}\| = \Theta\left( \frac{\|\sum_{i\neq 0} \phi_i \gamma_{iz} \ket{\lambda_i}\|}{\left\|\sum_i \phi_i \gamma_{iz} \ket{\lambda_i}\right\|}\right).
\end{equation}
For $i\neq 0$, \eqref{eq:upperboundgamma} implies that 
\begin{equation} \label{eq:gammagap}
	|\gamma_{iz} | \leq \frac{1}{2^{k+1}\Delta}. 
\end{equation}
Moreover, \eqref{eq:lowerboundgamma} implies $|\gamma_{0z}| \geq 2/\pi$. 
Hence,
\begin{align}
\left(\frac{\|\sum_{i\neq 0} \phi_i \gamma_{iz} \ket{\lambda_i}\|}{\left\|\sum_i \phi_i \gamma_{iz} \ket{\lambda_i}\right\|}\right)^2 &= \frac{\sum_{i\neq 0 } |\phi_i|^2|\gamma_{iz}|^2}{\sum_i |\phi_i|^2|\gamma_{iz}|^2} \\
&\leq \left(\frac1{\pi|\phi_0| 2^{k}\Delta}\right)^2.
\end{align}
Thus, it is sufficient if $k$ satisfies 
\begin{equation}\label{eq:2kboundPEAPrep}
	2^k= O\left(\frac{1}{|\phi_0|\epsilon\Delta}\right)
\end{equation}
 for an error $\|\ket\lambda - \ket{\lambda_0}\|=O(\epsilon)$. Notice in particular the scaling  with $|\phi_0|$. 

Suppose that the value of $z$, i.e. the ground energy to a precision of $k$ binary digits, is known, where $k$ is known to satisfy \eqref{eq:2kboundPEAPrep}. 
The cost of a single run of phase estimation is $\tilde O\left(2^k\Lambda\right)$, up to polylogarithmic factors. Fixed point search requires $O(1/|\phi_0|)$ applications of phase estimation and $\mathcal C_\phi$. We thus arrive at a total gate count of  
\begin{equation}
\tilde O\left(\frac{2^k\Lambda+\Phi}{|\phi_0|} \right) = \tilde O \left(\frac\Lambda{|\phi_0|^2\Delta\epsilon}+\frac\Phi{|\phi_0|}\right)
\end{equation}
as claimed. \end{proof} 

Suppose next that the ground energy is not known beforehand. If phase estimation and minimum label finding (Appendix~\ref{app:PEAenergy}) is first used to find the ground energy, we require a precision of $O(|\phi_0|\epsilon\Delta)$. Since $|\phi_0|$ is not assumed to be known, we need to run phase estimation to find the energy to a precision of $\xi=O(\chi\epsilon\Delta)$. Thus, from 
Proposition~\ref{prop:PEAEst}, the number of gates to first find the ground energy to the required precision takes 
\begin{equation}
 \tilde O \left(\frac\Lambda{\chi^4\Delta\epsilon}+\frac\Phi{\chi}\right)
\end{equation}
gates. 

\begin{corollary}[Ground state preparation with phase estimation for unknown ground energy]
 If the ground energy is not known beforehand, phase estimation can be used to prepare a state $\epsilon$-close to the ground state in a gate complexity of 
 \begin{equation}
 \tilde O \left(\frac\Lambda{\chi^4\Delta\epsilon}+\frac\Phi{\chi}\right),
\end{equation}
and using $O(\log N + \log 1/\epsilon + \log 1/\Delta)$ qubits. 
\end{corollary}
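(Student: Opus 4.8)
The plan is to reduce the unknown-energy case to the known-energy Proposition by first estimating $\lambda_0$ to sufficient precision via Proposition~\ref{prop:PEAEst}, and then running the known-energy preparation routine. The preceding (known-energy) Proposition requires $\lambda_0$ to be known to precision $O(|\phi_0|\epsilon\Delta)$; since only the lower bound $\chi\leq|\phi_0|$ is available, it suffices to learn $\lambda_0$ to the finer precision $\xi=O(\chi\epsilon\Delta)$, which certainly satisfies $\xi=O(|\phi_0|\epsilon\Delta)$.

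First I would invoke Proposition~\ref{prop:PEAEst} with this target precision $\xi=\Theta(\chi\epsilon\Delta)$. That proposition returns an estimate of $\lambda_0$ to additive error $\xi$ at a gate cost of $\tilde O\!\left(\Lambda/(\chi^3\xi)+\Phi/\chi\right)$, using $O(\log N+\log\xi^{-1})$ qubits. Substituting $\xi=\Theta(\chi\epsilon\Delta)$ yields a cost of $\tilde O\!\left(\Lambda/(\chi^4\Delta\epsilon)+\Phi/\chi\right)$ for this energy-estimation step, and the qubit count becomes $O(\log N+\log\epsilon^{-1}+\log\Delta^{-1})$ after absorbing $\log\chi^{-1}=O(\log N)$ into $\log N$.

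Next, with $\lambda_0$ now known to precision $O(|\phi_0|\epsilon\Delta)$, I would run the known-energy preparation algorithm of the preceding Proposition, whose cost is $\tilde O\!\left(\Lambda/(|\phi_0|^2\Delta\epsilon)+\Phi/|\phi_0|\right)$. Using $|\phi_0|\geq\chi$, this is dominated termwise by the energy-estimation cost above. The two subroutines execute sequentially on the same $O(\log N+\log\epsilon^{-1}+\log\Delta^{-1})$ qubits, so the qubit count is unchanged and the total gate cost is $\tilde O\!\left(\Lambda/(\chi^4\Delta\epsilon)+\Phi/\chi\right)$, as claimed.

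The only delicate point — already isolated in the preceding known-energy Proposition — is that the ground energy must be known to a precision scaling with $|\phi_0|$ rather than merely with $\Delta$, because the imperfections in the phase-estimation residual states accumulate. This is precisely what forces the extra factor of $1/\chi$ relative to pure energy estimation (upgrading the $\chi^3$ of Proposition~\ref{prop:PEAEst} to $\chi^4$): the unknown $|\phi_0|$ must be replaced by its lower bound $\chi$ both in the required energy precision $\xi$ and in the $O(1/\chi)$ amplitude-amplification rounds, and I expect verifying that these two uses of $\chi$ combine correctly to be the main point to check.
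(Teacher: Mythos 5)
Your proposal is correct and follows essentially the same route as the paper: first invoke Proposition~\ref{prop:PEAEst} with precision $\xi=\Theta(\chi\epsilon\Delta)$ (the lower bound $\chi$ substituting for the unknown $|\phi_0|$ in the required precision $O(|\phi_0|\epsilon\Delta)$), costing $\tilde O\left(\frac{\Lambda}{\chi^4\Delta\epsilon}+\frac{\Phi}{\chi}\right)$, then run the known-energy preparation, whose cost is dominated by the estimation step. Your accounting of the qubit count, including absorbing $\log\chi^{-1}=O(\log N)$, also matches the paper's.
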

 
We now argue that the dependence of $2^k$ on $|\phi_0|$ in \eqref{eq:2kboundPEAPrep}, and thus the quadratic dependence on $|\phi_0|$ in \eqref{eq:PEAknownPrepGates},  is essentially tight for this algorithm. Suppose that
\begin{equation}
\frac{\sum_{i\neq 0 } |\phi_i|^2|\gamma_{iz}|^2}{\sum_i |\phi_i|^2|\gamma_{iz}|^2} \leq \epsilon^2. 
\end{equation}
Then,
\begin{align}
	\epsilon^2 &\geq \frac{\sum_{i\neq 0 } |\phi_i|^2|\gamma_{iz}|^2}{\sum_i |\phi_i|^2|\gamma_{iz}|^2} \\
	&= \frac{\sum_{i\neq 0 } |\phi_i|^2|\gamma_{iz}|^2}{|\phi_0|^2|\gamma_{0z}|^2+\sum_{i\neq 0} |\phi_i|^2|\gamma_{iz}|^2}\\
	&\geq (1-\epsilon^2)\frac{\sum_{i\neq 0 } |\phi_i|^2|\gamma_{iz}|^2}{|\phi_0|^2|\gamma_{0z}|^2}\\
	&\geq (1-\epsilon^2)\frac1{|\phi_0|^2} \sum_{i\neq 0 } |\phi_i|^2|\gamma_{iz}|^2.
\end{align}
Suppose that for all $i\neq 0$, 
\begin{equation}
	r(2^k\lambda_i -z) \geq c,
\end{equation}
 where $r(x) := |x-\lfloor x\rceil|\in [0,1/2]$, for some constant $c>0$. Notice that this can be achieved for any Hamiltonian of the form $H= \frac1{2^k} (\tilde H + c\id)$, where $\tilde H$ is any Hamiltonian with integer spectrum. Then, using Eq.~\eqref{eq:thetabound},
\begin{equation} \label{eq:gammalowerbound2}
	|\gamma_{iz}| = \frac1{2^k}\left(\frac{1-e^{2\pi i r(2^k\lambda_i-z)}}{1-e^{2\pi i (\lambda_i-z/2^k)}}\right)
	\geq \frac c{2^{k-1}}. 
\end{equation}
Thus,
\begin{equation}
 \epsilon^2 \geq (1-\epsilon^2)\frac{c^2(1-|\phi_0|^2)}{4^{k-1}|\phi_0|^2}.
\end{equation}
Therefore, $2^{k} = \Omega\left(\frac{1}{|\phi_0|\epsilon}\right)$, and the claim follows. 

One can also show that the linear dependence on $1/\Delta$ is tight: if our trial state is $\ket\phi = \phi_0\ket{\lambda_0} + \phi_1\ket{\lambda_1}$ with $\lambda_1=\lambda_0+\Delta$, then \eqref{eq:gammalowerbound2} can be replaced by 
\begin{equation} \label{eq:gammalowerbound3}
	|\gamma_{iz}| = \frac1{2^k}\left(\frac{1-e^{2\pi i r(2^k\lambda_1-z)}}{1-e^{2\pi i (\lambda_1-z/2^k)}}\right)
	\geq \frac c{\pi 2^{k-1}\Delta},
\end{equation}
which follows from Eq.~\eqref{eq:thetabound}. Thus,
$2^k = \Omega\left(\frac1{|\phi_0|\epsilon\Delta}\right)$, and the claim follows. 

\section{Filtering method by Poulin \& Wocjan} \label{app:PoulinWocjan}

Previously, Poulin and Wocjan proposed a filtering method  \cite{PoulinWocjan09} as an improvement to phase estimation that only has an inverse dependence on the overlap $|\phi_0|$. We briefly review this algorithm here, and show that it can be  adapted to achieve a runtime that scales only polylogarithmically in the allowed error (the analysis provided in \cite{PoulinWocjan09} only yields a state with low expected energy, and only an error analysis for the expected energy rather than the residual state is given there). We also show that this method can be combined with the mimimum label finding algorithm in case the ground energy is not known beforehand to first find the ground energy to the required precision.

The Poulin-Wocjan algorithm is based on the following idea:  let $\mathcal A $ be the circuit of phase estimation with $k$ ancilla qubits, but without the inverse Fourier transform. Then $\mathcal A\ket{\lambda_i}\ket0^{\otimes k} = \ket{\lambda_i}\ket{\varphi_i}$, where
\begin{equation}
 \ket{\varphi_i} := \frac1{\sqrt{2^k}}\sum_{x=0}^{2^k-1} e^{2\pi i \lambda_i x}\ket x
\end{equation}
is a momentum state encoding of $\lambda_i$. 
Since $\mathcal A$ maps $\ket{\lambda_i}\ket0^{\otimes k}$ to $\ket{\lambda_i}\ket{\varphi_i}$, then for any state $\ket\mu$ on $k$ qubits, 
$\mathcal A^\dagger$ maps $\ket\phi\ket \mu$ to 
\begin{equation}
	\sum_i \phi_i \braket{\varphi_i}{\mu} \ket{\lambda_i}\ket{0}^{\otimes k} + \ket{R}
\end{equation}
where  $\ket R $ has no overlap with $\ket{0}^{\otimes k}$ on the ancillas. Hence, starting with $\eta$ copies of $\ket\mu$ on $\eta k$ ancillas ($\eta$ will be specified later), applying $\eta$ copies of $\mathcal A^\dagger$ maps $\ket\phi\ket \mu^{\otimes\eta}$ to 
\begin{equation} \label{eq:mPWresult}
		\sum_i \phi_i \braket{\varphi_i}{\mu}^\eta \ket{\lambda_i}\ket{0}^{\otimes \eta k} + \ket{R'},
\end{equation}
where $\ket{R'}$ has no overlap with $\ket{0}^{\otimes \eta k}$ on the ancillas. The central idea of \cite{PoulinWocjan09} is that for suitable choices of $\eta$ and $\ket\mu$, $|\braket{\varphi_i}\mu|^\eta$ is a ``filter function'' that is centered around $\lambda_0$ and falls off quickly, thus suppressing all terms in \eqref{eq:mPWresult} except for the contribution from $\ket{\lambda_0}$. We now show that this idea can be used to obtain a ground state preparation algorithm whose runtime also only scales polylogarithmically with $\epsilon$. 

\begin{proposition}[Ground state preparation with filtering method for known ground energy]\label{prop:filteringKnown}
Suppose that the value of $\lambda_0$ is known to a precision of 
\begin{equation}
O\left( \frac\Delta{\sqrt{\log^{3} \frac1{|\phi_0|\epsilon}\log\log\frac1{|\phi_0|\epsilon}}}\right). 
\end{equation}
Then the Filtering method can prepare a state $\epsilon$-close to $\ket{\lambda_0}$ in a gate complexity of 
\begin{equation}\label{eq:mPWknownGates}
	\tilde O\left(\frac{\Lambda}{|\phi_0| \Delta}+\frac\Phi{|\phi_0|}\right)
\end{equation}
and using 
\begin{equation} \label{eq:filteringQubits}
	O\left(\log N + \log \frac1{\epsilon}+\frac{\log\frac1{\chi\epsilon}}{\log\log\frac{1}{\chi\epsilon}}\times\log\frac 1\Delta\right)
\end{equation}
qubits. 
\end{proposition}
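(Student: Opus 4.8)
The plan is to instantiate the filter state $\ket\mu$ as the momentum eigenstate centred at the known estimate $E$ of the ground energy, $\ket\mu = 2^{-k/2}\sum_{x=0}^{2^k-1} e^{2\pi i E x}\ket x$, so that the overlaps $c_i := \braket{\varphi_i}{\mu}$ become the Dirichlet kernel $c_i = \frac1{2^k}\sum_x e^{2\pi i(E-\lambda_i)x}$ evaluated at $E-\lambda_i$. The whole argument then rests on two elementary bounds on $|c_i|$. Using the analogue of \eqref{eq:thetabound}/\eqref{eq:upperboundgamma}, every excited state, which satisfies $|E-\lambda_i|\geq\Delta-\delta_E=\Theta(\Delta)$ with $\delta_E:=|E-\lambda_0|$, obeys $|c_i|\leq \frac1{2^{k+1}\Delta}$; conversely, expanding the kernel around its peak gives $|c_0|\geq 1-O\big((2^k\delta_E)^2\big)$. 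Writing $R:=2^{k+1}\Delta$, the first bound contributes a suppression $|c_i|^\eta\leq R^{-\eta}$ after the $\eta$ filtering rounds, which is the source of the desired polylogarithmic-in-$\epsilon$ behaviour.

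Next I would carry out the residual-state error analysis — the step that goes beyond the expected-energy bound of \cite{PoulinWocjan09}. After projecting the $\eta k$ ancillas of \eqref{eq:mPWresult} onto $\ket0^{\otimes\eta k}$, the normalised residual state is $\propto\sum_i\phi_i c_i^\eta\ket{\lambda_i}$, so
\begin{equation}
\left\| \frac{\sum_i\phi_i c_i^\eta\ket{\lambda_i}}{\|\sum_i\phi_i c_i^\eta\ket{\lambda_i}\|} - \ket{\lambda_0}\right\| = \Theta\!\left(\frac{\big(\sum_{i\neq0}|\phi_i|^2|c_i|^{2\eta}\big)^{1/2}}{|\phi_0|\,|c_0|^\eta}\right) \leq \frac{R^{-\eta}}{|\phi_0|\,|c_0|^\eta},
\end{equation}
where the last step uses $|c_i|\leq R^{-1}$ together with $\sum_{i\neq0}|\phi_i|^2\leq1$. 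Choosing parameters so that $|c_0|^\eta=\Omega(1)$, and writing $L:=\log\tfrac1{|\phi_0|\epsilon}$, this is $O(\epsilon)$ precisely when $\eta\log R=\Omega(L)$. The condition $|c_0|^\eta=\Omega(1)$ in turn requires $\eta(2^k\delta_E)^2=O(1)$, i.e. $\delta_E=O\big(1/(2^k\sqrt\eta)\big)$, which is where the stated precision on $\lambda_0$ comes from once the final values of $k$ and $\eta$ are substituted.

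It then remains to optimise $k$ and $\eta$ against the cost. Each of the $\eta$ copies of $\mathcal A^\dagger$ costs $\tilde O(2^k\Lambda)$ gates (controlled $U,U^2,\dots,U^{2^{k-1}}$ with $U=e^{-2\pi i\tilde H}$), and amplitude amplification or fixed-point search contributes a factor $O(1/|\phi_0|)$ since the success amplitude is $\Omega(|\phi_0|)$; together with the single call to $\mathcal C_\phi$ this gives $\tilde O\big(\tfrac{\eta 2^k\Lambda}{|\phi_0|}+\tfrac\Phi{|\phi_0|}\big)$, which equals \eqref{eq:mPWknownGates} iff $\eta 2^k=\tilde O(1/\Delta)$, i.e. $\eta R=\polylog$. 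The number of filter ancillas is $\eta k=\eta(\log R+\log\tfrac1\Delta)\geq\Omega(L)+\eta\log\tfrac1\Delta$, so to minimise it I want $\eta$ as small as the error constraint $\eta\log R\geq\Omega(L)$ permits, i.e. $\log R$ as large as the runtime constraint $\eta R=\polylog$ allows. The balance is struck by letting $R$ be \emph{polylogarithmic} (rather than constant) in $\tfrac1{\chi\epsilon}$: then $\log R=\Theta(\log L)=\Theta(\log\log\tfrac1{\chi\epsilon})$ and $\eta=\Theta(L/\log L)$, giving $\eta k=O\big(\log\tfrac1\epsilon+\tfrac{\log(1/\chi\epsilon)}{\log\log(1/\chi\epsilon)}\log\tfrac1\Delta\big)$; adding the $\log N$ system register and using $\log\tfrac1{|\phi_0|}=O(\log N)$ yields \eqref{eq:filteringQubits}. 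The crucial point is that this extra $1/\log\log$ factor — the whole qubit advantage over the constant-$R$ choice — survives only because $L$ is itself polylogarithmic, so a polylogarithmic $R$ is still compatible with $\eta R=\polylog$.

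The main obstacle I expect is the residual-state error bound rather than the expected energy: one must rule out that the many sub-dominant tails $\phi_i c_i^\eta$, summed over all excited states, conspire to spoil the overlap with $\ket{\lambda_0}$ — exactly the accumulation-of-imperfections effect flagged in the text and in \cite{PoulinWocjan09} — and it is this that forces the uniform decay bound $|c_i|\leq R^{-\eta}$ to be combined with $\sum_{i\neq0}|\phi_i|^2\leq1$. A secondary subtlety is the periodicity and edge behaviour of the Dirichlet kernel near the boundary of $[0,1)$, which must be handled to guarantee the decay bound for every excited state; and the three-way parameter tradeoff above, whose only non-obvious ingredient is the decision to let $R$ grow polylogarithmically.
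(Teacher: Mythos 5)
Your proposal is correct and follows essentially the same route as the paper's proof: the same momentum-state filter $\ket\mu$ centred at the known estimate, the same Dirichlet-kernel bounds $|c_i|\leq\frac{1}{2^{k+1}\Delta}$ and $|c_0|^\eta=\Omega(1)$ under $\delta_E=O\bigl(1/(2^k\sqrt\eta)\bigr)$ (the paper's Eq.~\eqref{eq:muCondition}), the same residual-state error bound $O\bigl(R^{-\eta}/|\phi_0|\bigr)$, and the same parameter balance --- your ``polylogarithmic $R$'' is exactly the paper's choice $k=\lceil\log_2\frac1\Delta\rceil+O(\log\log\frac1{\chi\epsilon})$ and $\eta=O\bigl(\log\frac1{\chi\epsilon}/\log\log\frac1{\chi\epsilon}\bigr)$, including the substitution of $\chi$ for the unknown $|\phi_0|$. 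The only cosmetic difference is that the paper discretises $\mu$ to $k+l$ bits so that $\ket\mu$ is prepared exactly via a quantum Fourier transform, a detail your argument absorbs into the $\tilde O$ notation.
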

\begin{proof}
Suppose  that we know a value of $\mu\in[0,1)$ such that  
\begin{equation} \label{eq:muCondition}
	|\mu-\lambda_0| < \frac1{2^{k+1}\pi\sqrt{\eta}},
\end{equation}
where $k$ and $\eta$ will be specified later. Note that this is the same as assuming that we know $\lambda_0$ up to $k+l$ binary digits, where $l:=\lceil\log_2(2\pi\sqrt \eta)\rceil$, and we can w.l.o.g. assume that $2^{k+l}\mu\in\mathbb Z$. 
Choose
\begin{equation}
	\ket\mu :=  \frac1{\sqrt{2^k}}\sum_{x=0}^{2^k-1} e^{2\pi i \mu x}\ket x. 
\end{equation}
Note that $\ket\mu$ can be efficiently prepared from the computational basis state $\ket{2^{k+l}\mu}$ by first applying the quantum Fourier transform on $k+l$ qubits, then applying Hadamard gates on the last $l$ qubits, and finally discarding the last $l$ qubits. The circuit $\mathcal C_\mu$ preparing $\ket\mu$ from $\ket{0}^{\otimes (k+l)}$ thus only requires $O((k+l)^2)$ gates \cite{NielsenChuang}. Similarly to \eqref{eq:upperboundgamma} and \eqref{eq:gammagap}, for $i\neq 0$,
\begin{equation}
	|\!\braket{\varphi_i}\mu\!|\leq \frac1{2^{k+1}|\lambda_i-\mu|} \leq \frac{1}{2^{k+1}\Delta}.
\end{equation}
Moreover, it can be shown \cite{PoulinWocjan09} that \eqref{eq:muCondition} implies
\begin{equation}
	\label{eq:lowerboundbraketeta}
	|\!\braket{\varphi_0}\mu\!|^\eta \geq \frac12. 
\end{equation}
Thus, using amplitude amplification or fixed point search to search for $\ket 0^{\otimes\eta k}$ on \eqref{eq:mPWresult}, we obtain with high probability the state
\begin{equation}
	\ket\sigma:=\frac{\sum_i \phi_i \braket{\varphi_i}{\mu}^\eta \ket{\lambda_i}}{\left\|\sum_i \phi_i \braket{\varphi_i}{\mu}^\eta \ket{\lambda_i}\right\|}
\end{equation}
with $O(\eta/|\phi_0|)$ uses of $\mathcal A^\dagger$ and $\mathcal C_\mu$, and $O(1/|\phi_0|)$ uses of $\mathcal C_\phi$. 

We now show that $\ket\sigma$ is a good approximation of $\ket{\lambda_0}$ for appropriate choices of $k$ and $\eta$. We have that
\begin{align}
\frac12\|\ket\sigma - \ket{\lambda_0}\|^2 &\leq \frac{\|\sum_{i\neq 0} \phi_i \braket{\varphi_i}{\mu}^\eta \ket{\lambda_i}\|^2}{\left\|\sum_i \phi_i \braket{\varphi_i}{\mu}^\eta \ket{\lambda_i}\right\|^2} \\
&= \frac{\sum_{i\neq 0} |\phi_i|^2 |\!\braket{\varphi_i}\mu\!|^{2\eta}}{\sum_{i} |\phi_i|^2 |\!\braket{\varphi_i}\mu\!|^{2\eta}} \\
&\leq 4\left(\frac{1}{2^{k+1}\Delta}\right)^{2\eta} \frac{1}{|\phi_0|^2}.
\end{align}
Thus, in order to obtain
\begin{equation}
 \|\ket\sigma - \ket{\lambda_0}\| < \epsilon,
\end{equation}
 it suffices if $k$ and $\eta$ satisfy
\begin{equation} 
k = \left\lceil \log_2\frac1\Delta\right\rceil + O\left(\log\log\frac1{|\phi_0|\epsilon}\right)
\end{equation}
 and 
\begin{equation}
  \eta = O\left(\frac{\log\frac 1{|\phi_0|\epsilon}}{\log\log \frac1{|\phi_0|\epsilon}}\right). 
\end{equation} 
But since $\eta$ is a parameter of the algorithm that needs to be chosen beforehand, and $|\phi_0|$ is unknown, we need to choose
\begin{equation}
  \eta = O\left(\frac{\log\frac 1{\chi\epsilon}}{\log\log \frac1{\chi\epsilon}}\right)
\end{equation} 
to ensure the algorithm works. 
The full algorithm thus requires 
\begin{equation}
\tilde O\left(\frac{\eta 2^k\Lambda + \Phi}{|\phi_0|}\right) = \tilde O\left(\frac{\Lambda}{|\phi_0| \Delta}+\frac\Phi{|\phi_0|}\right)
\end{equation} 
gates and 
\begin{equation}
O(\log N) + \eta k=O\left(\log N + \log \frac1{\epsilon}+\frac{\log\frac1{\chi\epsilon}}{\log\log\frac{1}{\chi\epsilon}}\times\log\frac 1\Delta\right)
\end{equation}
  qubits. 
  \end{proof} 
  There does not appear to be an obvious way, such as a recycling scheme, to reduce the number of qubits required.

 Suppose now that the value of $\mu$ is not known beforehand. We show now that in this case, one can combine the filtering method with the minimum label finding algorithm to determine a suitable value of $\mu$ beforehand.
 
 \begin{proposition}[Ground state preparation with filtering method for unknown ground energy] \label{prop:filteringUnknown}
 	If the ground energy is not known beforehand, the same task as in Proposition~\ref{prop:filteringKnown} can be achieved in a gate complexity of 
 	\begin{equation}
 		\tilde O\left(\frac\Lambda{\chi\Delta^{3/2}} + \frac\Phi{\chi\sqrt\Delta}\right)
 	\end{equation}
 	and the same number \eqref{eq:filteringQubits} of qubits. 
 \end{proposition}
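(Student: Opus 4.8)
The plan is to follow the same template as the proof of Theorems~\ref{thm:GSEunknown}--\ref{thm:estimation} in Section~\ref{subsec:proofsUnknown}: replace the unknown value $\mu$ of Proposition~\ref{prop:filteringKnown} by a uniform superposition over a grid of candidates, run the filtering circuit coherently controlled on that grid register, and then invoke the minimum label finding algorithm (Proposition~\ref{prop:minFinding}) to extract the smallest grid point whose residual state has large norm. Concretely, I would fix $k$ and $\eta$ as in Proposition~\ref{prop:filteringKnown} (using $\chi$ in place of $|\phi_0|$ in the choice of $\eta$), and take $\mu_j := j/L$ for $j=0,\ldots,L-1$ with spacing equal to the precision $s=\tfrac{1}{2^{k+1}\pi\sqrt\eta}=\tilde\Theta(\Delta)$ required there, so that $L=\tilde\Theta(\Delta^{-1})$ is a power of two. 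Let $V$ be the filtering circuit controlled on a register $\mathcal L$ of $l=\log_2 L$ qubits, i.e.\ the $\eta$ copies of $\mathcal A^\dagger$ together with the controlled momentum-state preparation $\sum_j\ketbra jj_{\mathcal L}\otimes\mathcal C_{\mu_j}$. Applying $V$ to $\ket+^{\otimes l}_{\mathcal L}\ket0^{\otimes\eta k}\ket\phi$ then produces $\tfrac1{\sqrt L}\sum_j\ket j_{\mathcal L}\ket0^{\otimes\eta k}\ket{\Phi_j}+\ket R$ with $\ket{\Phi_j}\propto\sum_i\phi_i\braket{\varphi_i}{\mu_j}^\eta\ket{\lambda_i}$, in direct analogy with \eqref{eq:mPWresult} and \eqref{eq:Vphi}.

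The two quantitative inputs of Proposition~\ref{prop:minFinding}, applied with $\zeta=\Theta(\chi/\sqrt L)$ (each branch carries amplitude $\|\ket{\Phi_j}\|/\sqrt L$), are then checked as follows. For the largest index $J$ with $\mu_J\le\lambda_0$, condition \eqref{eq:muCondition} holds and \eqref{eq:lowerboundbraketeta} gives $\|\ket{\Phi_J}\|\ge|\phi_0|/2\ge\chi/2$, supplying the target branch. For the tail I would use that $\mu_j\le\lambda_0$ forces $|\lambda_i-\mu_j|\ge\lambda_0-\mu_j=:\delta_j$ for every $i$, whence $\|\ket{\Phi_j}\|^2\le(2^{k+1}\delta_j)^{-2\eta}$; summing this polynomially decaying tail over $j\le\tilde J$ with step $2^{k+1}s=(\pi\sqrt\eta)^{-1}$ bounds $\sum_{j\le\tilde J}\|\ket{\Phi_j}\|^2$ by $O\bigl(\eta^{-1/2}(2^{k+1}\delta_{\tilde J})^{1-2\eta}\bigr)$, which falls below the threshold of \eqref{eq:AppALem1Sum} once $\delta_{\tilde J}=\tilde\Theta(\Delta)$. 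This is the exact analog of the estimate preceding \eqref{eq:deltaTildeJ}. Proposition~\ref{prop:minFinding} then returns, with high probability, a label $j\in[\tilde J,J]$ and the normalised state $\ket{\Phi_j}/\|\ket{\Phi_j}\|$ using $\tilde O(\sqrt L/\chi)$ calls to $V$ and $\mathcal C_\phi$.

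The gate and qubit counts I would argue exactly as in Section~\ref{subsec:proofsUnknown}. The $\mu_j$-dependence enters only through the cheap controlled preparation $\mathcal C_{\mu_j}$, while the expensive $\mathcal A^\dagger$ is independent of $j$; hence $V$ costs essentially the same as a single run of the filtering circuit, namely $\tilde O(\eta 2^k\Lambda)=\tilde O(\Lambda/\Delta)$ for the Hamiltonian simulation plus $\Phi$ for the trial-state preparation. With $\sqrt L=\tilde\Theta(\Delta^{-1/2})$ this yields the claimed $\tilde O\bigl(\Lambda/(\chi\Delta^{3/2})+\Phi/(\chi\sqrt\Delta)\bigr)$. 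The extra register $\mathcal L$ contributes only $l=O(\log\Delta^{-1})$ qubits and the minimum label finding a single further qubit, both absorbed into \eqref{eq:filteringQubits}.

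The main obstacle, and the genuine difference from Section~\ref{subsec:proofsUnknown}, is that the filter amplitude $\braket{\varphi_0}{\mu_j}$ is a Dirichlet kernel and is \emph{not} monotone in $\delta_j$ the way $\cos^M(\tau+\delta_j)$ is: it has zeros at $\delta_j=m/2^k$ lying inside $[\mu_{\tilde J},\lambda_0]$, where $\ket{\Phi_j}/\|\ket{\Phi_j}\|$ is not close to $\ket{\lambda_0}$. Since the search may return any label in $[\tilde J,J]$, I must certify that every returned label yields an $\epsilon$-close state. I expect to handle this by choosing the number of extra precision bits in $k$ (equivalently $2^{k+1}\Delta=\tilde\Theta(\log\tfrac1{\chi\epsilon})$) and the constant in $\eta$ so that the ``good'' radius, within which the envelope lower bound on $\braket{\varphi_0}{\mu_j}$ forces error $O(\epsilon)$, comfortably exceeds the tail radius $\delta_{\tilde J}=\tilde\Theta(\Delta)$, and by observing that the exceptional near-zero labels carry only negligible weight $\|\ket{\Phi_j}\|^2=O((2^{k+1}\Delta)^{-2\eta})$, so that conditioned on a successful search they are measured with vanishing probability. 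Verifying that these two radii can be made simultaneously achievable, a short comparison of competing powers of $\log\tfrac1{\chi\epsilon}$, is the crux of the argument.
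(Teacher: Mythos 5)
Your construction coincides with the paper's own proof up to and including the call to Proposition~\ref{prop:minFinding}: the paper likewise prepares a uniform superposition over grid values $\mu_j$, runs the filtering circuit controlled on the grid register (with the expensive $\mathcal A^\dagger$ independent of $j$, exactly as you argue), lower-bounds $\|\ket{\Phi_J}\|$ via \eqref{eq:lowerboundbraketeta}, bounds the tail $\sum_{j\leq \tilde J}\|\ket{\Phi_j}\|^2$ by the same Dirichlet-kernel estimate $\left(2^{k_1+1}|\lambda_0-\mu_j|\right)^{-2\eta_1}$, and arrives at the identical $\tilde O(\sqrt{L}/\chi)$ repetition count and gate/qubit bookkeeping. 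Where you genuinely diverge is the final step. The paper does \emph{not} use the residual state of the search as the output state: it chooses the grid a polylogarithmic factor \emph{finer} than the precision $\xi_F = 1/(2^{k+1}\pi\sqrt\eta)$ of \eqref{eq:muCondition} --- namely $2^{k_1}=O(\xi_F^{-1}\log\frac1\chi)$ with a separate, smaller exponent $\eta_1 = O(\log\frac1\chi/\log\log\frac1\chi)$ --- so that $D=J-\tilde J$ grid points fit inside a single precision window while still satisfying $D^{2\eta_1-1}>\chi^{-2}$. Consequently \emph{every} label $j\in[\tilde J,J]$ obeys $|\mu_j-\lambda_0|<\xi_F$ and is a valid input to Proposition~\ref{prop:filteringKnown}; the search is used purely as an energy estimator, and the state is then prepared by a second, known-energy run whose cost $\tilde O(\Lambda/(\chi\Delta)+\Phi/\chi)$ is dominated by the estimation cost. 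This two-phase structure sidesteps entirely the Dirichlet-zero problem you correctly identify as the crux: the paper never needs any statement about the quality of $\ket{\Phi_j}/\|\ket{\Phi_j}\|$ at intermediate labels.

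Your one-shot variant, which keeps a coarse grid of spacing $\tilde\Theta(\Delta)$ and outputs the residual state directly, is viable and saves the rerun, but as written it rests on a property not contained in the statement of Proposition~\ref{prop:minFinding}: that \emph{conditioned on success} the label distribution within $[\tilde J,J]$ is proportional to $\|\ket{\Phi_j}\|^2$ among prefix-consistent labels. This is true --- fixed-point search acts in the two-dimensional span of the normalised good and bad components, so relative weights inside the searched subspace are preserved and the final ancilla measurement samples proportionally --- but it requires opening up the proof of the proposition rather than citing its black-box guarantee, which only promises \emph{some} $j\in[\tilde J,J]$. Granting that refinement, your accounting does close: at a bad label one has $|\phi_0|^2|\!\braket{\varphi_0}{\mu_j}\!|^{2\eta}\lesssim \epsilon^{-2}(2^{k+1}\Delta)^{-2\eta}$, and with $2^{k+1}\Delta = \Theta(\polylog\frac1{\chi\epsilon})$ and $\eta=\Theta(\log\frac1{\chi\epsilon}/\log\log\frac1{\chi\epsilon})$ this is $\poly(\chi\epsilon)$, while $[\tilde J,J]$ contains only polylogarithmically many grid points at your spacing, so the conditional probability of landing on a bad label is negligible against $\|\ket{\Phi_J}\|^2=\Omega(\chi^2)$. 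In short: same first phase and same complexity, but the paper's estimate-then-rerun route is the cleaner way to dispose of the non-monotonicity of the filter, whereas your direct route buys a single-pass algorithm at the price of a distributional argument you flagged but did not fully supply.
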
 
 \begin{proof}
 Let $k$ and $\eta$ be defined as in the proof of Proposition~\ref{prop:filteringKnown}. 
 Let $\mu_j = j/2^{k_1}$, where $k_1\geq k$ will be chosen later. It is easy to prepare the state 
 \begin{equation} \label{eq:filterSuperposMu}
 	\frac1{\sqrt{2^{k_1}}} \sum_{j=0}^{2^{k_1}-1} \ket j \ket{\mu_j}^{\otimes \eta_1} \ket \phi,
 \end{equation}
 where
 \begin{equation}
 \ket{\mu_j} = \frac1{\sqrt{2^{k_1}}}\sum_{x=0}^{2^k-1} e^{2\pi i \mu_j x}\ket x. 
 \end{equation}
 We  now run a controlled version of the filtering algorithm with $\eta_1\times k_1$ ancilla qubits. This produces the state
 \begin{equation}
 	 \sum_{j=0}^{2^{k_1}-1} \ket j \ket{\Phi_j},
 \end{equation}
 where
 \begin{equation}
 	\ket{\Phi_j} = \frac1{\sqrt{ 2^{k_1}}}\sum_i \phi_i \braket{\varphi_i}{\mu_j}^{\eta_1} \ket{\lambda_i}.
 \end{equation}
 Let $J$ be the smallest integer such that $|\mu_J-\lambda_0 | < \frac{1}{2^{k+2}\pi\sqrt\eta}$. Then from \eqref{eq:lowerboundbraketeta}, $\|\ket{\Phi_J}\| \geq \frac{|\phi_0|}{2\sqrt{ 2^{k_1}}} \geq \frac{\chi}{2\sqrt{ 2^{k_1}}}$. Let $\tilde J<J$ be an integer such that
 \begin{equation}
 	\sum_{j=0}^{\tilde J} \|\ket{\Phi_j}\|^2 = O\left(\frac{\chi^2}{2^{k_1}}\right). 
 \end{equation}
 Then, the minimum label finding algorithm finds an integer $j\in[\tilde J,J]$ with high probability. To obtain a  value of $j$ that gives rise to a good approximation of $\lambda_0$, we need to ensure that $\mu_J-\mu_{\tilde J} < \frac{1}{2^{k+2}\pi\sqrt\eta}$, since the latter then implies $|\mu_{j} - \lambda_0| < \frac{1}{2^{k+1}\pi\sqrt\eta}$ for all $j\in[\tilde J,J]$. 
 
 Let $D= J-\tilde J$. We have
 \begin{align}
 	\sum_{j=0}^{\tilde J} \|\ket{\Phi_j}\|^2 &= \frac1{2^{k_1}} \sum_{j=0}^{\tilde J } \sum_i |\phi_i|^2 |\braket{\varphi_i}{\mu_j}|^{2\eta_1} \\
 	&\leq  \frac1{2^{k_1}}  \sum_{j=0}^{\tilde J } \sum_i |\phi_i|^2 \frac{1}{(2^{k_1+1}|\lambda_i-\mu_j|)^{2\eta_1}} \\
 	&<  \frac1{2^{k_1}}    \sum_{j=0}^{\tilde J }\frac{1}{(2^{k_1+1}|\lambda_0-\mu_j|)^{2\eta_1}} \\
 	&<  \frac1{2^{k_1}}    \sum_{j>D }\frac{1}{j^{2\eta_1}} \\
 &< \frac1{2^{k_1}}   \frac1{D^{2\eta_1-1}}. 
 \end{align}
 We thus require 
 \begin{equation}
 \frac1{D^{2\eta_1-1}} < \chi^2,
 \end{equation}
 with $D=O(2^{k_1}\xi_F)$, where $\xi_F = 1/(2^{k+1}\pi\sqrt\eta)$ is the required precision \eqref{eq:muCondition}. Thus, it suffices to choose
 \begin{equation}
 	2^{k_1} = O\left(\frac1{\xi_F} \log\frac1\chi\right)=O\left( \frac1\Delta\log^{3/2}\frac1{\chi\epsilon}\sqrt{\log\log \frac1{\chi\epsilon}}\log\frac1\chi\right)
 \end{equation}
 and
 \begin{equation}
 	\eta_1 = O\left(\frac{\log \frac1\chi}{\log\log \frac1{\chi}}\right).
 \end{equation}
 This will provide an estimate of $\lambda_0$ that can be used for the state preparation. The number of gates for this estimation is
 \begin{equation}
  \tilde O\left(\frac{\sqrt{2^{k_1}}}{\chi}\left(2^{k_1}\Lambda + \Phi\right)\right) = \tilde O\left(\frac{\Lambda}{\chi\Delta^{3/2} }+ \frac{\Phi}{\chi\Delta^{1/2}}\right),
 \end{equation}
 as claimed.
 \end{proof}
 
 Note that in analogy to Section~\ref{subsec:proofsUnknown}, the algorithm can be used to estimate the ground energy to an arbitrary precision $\xi = \tilde O(\Delta)$, by simply running the algorithm with a smaller value of $\Delta$. 
 
 \begin{corollary}[Ground energy estimation with filtering method] \label{cor:filteringEstimate}
 	Let $\xi=\tilde O(\Delta)$. Then the Filterning method can be used to estimate $\lambda_0$ to an additive precision of $\xi$ in a gate complexity of 
 	\begin{equation}
 		\tilde O\left( \frac\Lambda{\chi\xi^{3/2} } + \frac\Phi{\chi\sqrt\xi} \right), 
 	\end{equation}
 	and using 
 	\begin{equation} \label{eq:filteringEstimateQubits}
 		O\left( \log N + \frac{\log\frac1\chi}{\log\log\frac1\chi} \times\log\frac1\xi \right)
 	\end{equation}
 	qubits. 
 \end{corollary}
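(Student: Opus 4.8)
The plan is to obtain Corollary~\ref{cor:filteringEstimate} from Proposition~\ref{prop:filteringUnknown} by pure reparametrisation, exactly as Theorem~\ref{thm:estimation} was deduced from Theorem~\ref{thm:GSEunknown} in Section~\ref{subsec:proofsUnknown}, and as already flagged in the remark preceding the corollary. The decisive observation is that $\Delta$ enters the filtering algorithm of Proposition~\ref{prop:filteringUnknown} \emph{only} as a promised lower bound on the spectral gap: it dictates how strongly the filter $|\!\braket{\varphi_i}{\mu_j}\!|^{\eta_1}$ must suppress the excited eigenstates, hence how fine the grid $\mu_j=j/2^{k_1}$ must be. Since the label-finding analysis there never uses that $\Delta$ is the \emph{actual} gap, I would re-run the same algorithm with the gap parameter set to any value $\Delta'\leq\Delta$; the minimum-label-finding step (Proposition~\ref{prop:minFinding}) then returns, with high probability, a label $j$ whose grid point obeys $|\mu_j-\lambda_0|<\xi_F=\tilde\Theta(\Delta')$. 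Because $\Delta'\leq\Delta$, every genuine excited eigenvalue still lies at distance $\geq\Delta\geq\xi_F$ from $\lambda_0$, so the suppression bounds of Proposition~\ref{prop:filteringUnknown} remain valid verbatim (the true filter is at least as strong as the one the choice of $\Delta'$ was designed to guarantee).

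First I would make explicit that the register label itself \emph{is} the energy estimate: the proof of Proposition~\ref{prop:filteringUnknown} already establishes $|\mu_j-\lambda_0|<\xi_F$ for every admissible $j\in[\tilde J,J]$, so reading off $\mu_j$ yields $\lambda_0$ to additive precision $\xi_F$. To realise a prescribed $\xi=\tilde O(\Delta)$ I would set $\Delta'=\tilde\Theta(\xi)$ (still a legitimate gap bound) so that $\xi_F=\xi$, which forces $2^{k_1}=\tilde\Theta(1/\xi)$. Substituting $\Delta\to\xi$ into the estimation gate count $\tilde O(\sqrt{2^{k_1}}(2^{k_1}\Lambda+\Phi)/\chi)$ of Proposition~\ref{prop:filteringUnknown} then gives $\tilde O(\Lambda/(\chi\xi^{3/2})+\Phi/(\chi\sqrt\xi))$ directly.

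For the qubit count I would exploit that energy estimation, unlike state preparation, requires \emph{only} the minimum-label-finding half of the algorithm and not the subsequent high-fidelity projection onto $\ket{\lambda_0}$ by the known-energy routine. Consequently the preparation filter exponent $\eta=O(\log\frac{1}{\chi\epsilon}/\log\log\frac{1}{\chi\epsilon})$ --- the source of the $\log\frac1\epsilon$ term in \eqref{eq:filteringQubits} --- never enters, and the only filter in play is the estimation exponent $\eta_1=O(\log\frac1\chi/\log\log\frac1\chi)$ acting on registers of width $k_1=O(\log\frac1\xi)$ (up to $\log\log$ factors). The total $O(\log N)+\eta_1 k_1$ then collapses to $O(\log N+\frac{\log(1/\chi)}{\log\log(1/\chi)}\times\log\frac1\xi)$, matching \eqref{eq:filteringEstimateQubits}. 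The one genuine verification --- the single place where more than substitution is needed --- is to confirm that discarding the preparation step leaves the label-finding guarantees intact, i.e.\ that $\|\ket{\Phi_J}\|\geq\chi/(2\sqrt{2^{k_1}})$ and $\sum_{j\leq\tilde J}\|\ket{\Phi_j}\|^2=O(\chi^2/2^{k_1})$ continue to hold with $\eta_1$ governed by $\chi$ alone. Granting this, the corollary follows by the stated reparametrisation, which I expect to be routine.
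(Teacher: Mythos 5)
Your proposal is correct and is essentially the paper's own argument: the paper proves Corollary~\ref{cor:filteringEstimate} in one line by observing that $\Delta$ enters Proposition~\ref{prop:filteringUnknown} only as a promised lower bound on the gap, so one simply reruns the algorithm with a smaller value $\Delta'=\tilde\Theta(\xi)$ and substitutes $\Delta\to\xi$ in the gate count. Your additional verification of the qubit count --- that the estimation-only run needs just the minimum-label-finding stage, so the total is $O(\log N)+\eta_1 k_1$ with $\eta_1=O\left(\frac{\log(1/\chi)}{\log\log(1/\chi)}\right)$ and $k_1=O\left(\log\frac1\xi\right)$ up to $\log\log$ factors, with no $\epsilon$-dependence --- correctly fills in what the paper leaves implicit.
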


 %%%%
 
 Alternatively, one can also use a combined approach by using a prior run of phase estimation to first get a ``crude'' estimate of $\lambda_0$, similarly to the method in Section~\ref{sec:unknown} of the main text. This approach is useful if $\Delta$ is very small. 
 
 \begin{proposition}[Combining filtering method with phase estimation]
 	By combining the Filtering method with phase estimation, 
 	\begin{enumerate}[(i)]
 		\item The same task as in Proposition~\ref{prop:filteringUnknown} can be achieved in a gate complexity of 
 		\begin{equation}
 		 	\tilde O\left(\frac{\Lambda}{\chi^3\Delta^\kappa} + \frac{\Lambda}{\chi \Delta^{(3-\kappa)/2} }+\frac{\Phi}{\chi\Delta^{(1-\kappa)/2}}\right)
 		\end{equation}
 		and the same number \eqref{eq:filteringQubits} of qubits,
 		\item The same task as in Corollary~\ref{cor:filteringEstimate} can be achieved in a gate complexity of 
 		\begin{equation}
 		 	\tilde O\left(\frac{\Lambda}{\chi^3\xi^\kappa} + \frac{\Lambda}{\chi \xi^{(3-\kappa)/2} }+\frac{\Phi}{\chi\Delta^{(1-\kappa)/2}}\right)
 		\end{equation}
 		and the same number \eqref{eq:filteringEstimateQubits} of qubits,
 	\end{enumerate}
 	where $\kappa\in[0,1]$ is arbitrary.
 \end{proposition}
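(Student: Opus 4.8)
The plan is to follow the combined-approach argument used to prove Theorem~\ref{thm:combined} in Section~\ref{subsec:proofsUnknown}, but with the filtering subroutine of Proposition~\ref{prop:filteringUnknown} playing the role of the cosine/Fourier construction. For part (i), I would first invoke the phase-estimation procedure of Appendix~\ref{app:PEAenergy} (Proposition~\ref{prop:PEAEst}) to obtain a ``crude'' estimate of $\lambda_0$ to precision $O(\Delta^\kappa)$, at a cost of $\tilde O(\Lambda/(\chi^3\Delta^\kappa)+\Phi/\chi)$ gates. With high probability this produces an interval $[a,b]\ni\lambda_0$ of length $b-a=O(\Delta^\kappa)$, which replaces the full interval $[0,1)$ that was searched in Proposition~\ref{prop:filteringUnknown}.

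Second, I would re-run the controlled filtering and minimum-label-finding machinery of Proposition~\ref{prop:filteringUnknown}, but over a grid confined to $[a,b]$. Keeping the same values of $k$, $\eta$ and $\eta_1$ (hence the same filter width $\xi_F=1/(2^{k+1}\pi\sqrt\eta)=\tilde O(\Delta)$ and the same per-run cost $\tilde O(\eta 2^k\Lambda+\Phi)=\tilde O(\Lambda/\Delta+\Phi)$), I set $\mu_j=a+(b-a)j/L'$ with $L'=\Theta((b-a)/\xi_F)=\tilde O(\Delta^{\kappa-1})$, build the analogue of the superposition~\eqref{eq:filterSuperposMu} over these $L'$ labels, apply the controlled filtering, and then call Proposition~\ref{prop:minFinding} to locate the smallest admissible label. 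Since $L'\le\tilde O(1/\Delta)$ for $\kappa\in[0,1]$, the extra label qubits are absorbed into~\eqref{eq:filteringQubits}, so the qubit count is unchanged.

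The key step to check is that the prefix-sum hypothesis~\eqref{eq:AppALem1Sum} of the minimum-label-finding algorithm still holds on the restricted grid. This is essentially the same polynomial-tail estimate as in Proposition~\ref{prop:filteringUnknown}: because $|\braket{\varphi_i}{\mu_j}|^{\eta_1}$ decays like $(2^{k+1}|\lambda_0-\mu_j|)^{-\eta_1}$ away from $\lambda_0$, the sum $\sum_{j\le\tilde J}\|\ket{\Phi_j}\|^2$ is controlled by placing $\tilde J$ a fixed polylogarithmic number $D$ of grid steps below the first large label $J$, with $D^{2\eta_1-1}>1/\chi^2$; the restriction to $[a,b]$ only shortens the tail and so preserves the bound. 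Granting this, the minimum-label-finding step uses $\tilde O(\sqrt{L'}/\chi)=\tilde O(\Delta^{(\kappa-1)/2}/\chi)$ repetitions of the controlled filtering circuit, contributing $\tilde O(\Lambda/(\chi\Delta^{(3-\kappa)/2})+\Phi/(\chi\Delta^{(1-\kappa)/2}))$ gates; adding the phase-estimation cost and absorbing $\Phi/\chi$ (legitimate since $\Delta^{(1-\kappa)/2}\le 1$) yields the claimed complexity of (i).

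For part (ii) I would run the identical construction but with every resolution parameter set to the target precision $\xi$ rather than $\Delta$, exactly as in the remark preceding Corollary~\ref{cor:filteringEstimate}: take $2^k=\tilde O(1/\xi)$ so the filter peak has width $\xi_F=\tilde O(\xi)$, apply a crude phase-estimation step to precision $O(\xi^\kappa)$, and grid $[a,b]$ with $L'=\tilde O(\xi^{\kappa-1})$ points. The same accounting then reproduces the first two terms $\tilde O(\Lambda/(\chi^3\xi^\kappa)+\Lambda/(\chi\xi^{(3-\kappa)/2}))$, together with a $\Phi$-term governed by $\sqrt{L'}/\chi$; confirming the precise exponent of that $\Phi$-term against the stated bound is part of the bookkeeping to carry out carefully. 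The main obstacle, and the only place requiring genuine care, is precisely this probabilistic/quantitative bookkeeping: I must ensure the crude phase-estimation interval contains $\lambda_0$ with high enough probability that it is not the dominant failure mode, and re-derive~\eqref{eq:AppALem1Sum} on the sub-interval at the finer resolution so that the filter-suppression of the $i\neq 0$ terms (which needs $2^k\gtrsim 1/\Delta$ to separate $\lambda_0$ from $\lambda_1$, automatically satisfied once $2^k=\tilde O(1/\xi)$ with $\xi<\Delta$) and the polynomial tail bound hold simultaneously.
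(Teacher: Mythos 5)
Your proposal is correct and follows essentially the same route as the paper's own proof: a crude phase-estimation pass (Appendix~\ref{app:PEAenergy}) to precision $O(\Delta^\kappa)$ producing an interval $[a,b]$, followed by rerunning the controlled filtering and minimum label finding of Proposition~\ref{prop:filteringUnknown} on a grid of $L'=\tilde O(\Delta^{\kappa-1})$ points inside $[a,b]$, with part (ii) obtained by substituting $\xi$ for $\Delta$ as in Corollary~\ref{cor:filteringEstimate}. Your explicit re-verification of the prefix-sum hypothesis \eqref{eq:AppALem1Sum} on the restricted grid is a detail the paper leaves implicit, and your hedge about the exponent of the $\Phi$-term in (ii) is well placed, since the $\Delta^{(1-\kappa)/2}$ appearing there in the statement should, by your accounting, read $\xi^{(1-\kappa)/2}$.
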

 \begin{proof}
 To prove (i), first use the method from Appendix~\ref{app:PEAenergy} to obtain the ground energy to a precision of $\xi' = \Delta^\kappa$ for some $\kappa\in[0,1]$ chosen below. This provides us with an interval $[a,b]\ni \lambda_0 $ with $b-a = O(\Delta^\kappa)$. Let $\mu_j' = a+(b-a)j/L$ with $L=\Theta(\Delta^\kappa/\xi)$. Note that $\mu_{j+1}'-\mu_j' < \xi$. We now run the  algorithm from Proposition~\ref{prop:filteringUnknown}, but replacing \eqref{eq:filterSuperposMu} with
 \begin{equation}
 	\frac1{\sqrt L} \sum_{j=0}^L \ket j \ket{\mu_j'}^{\otimes\eta}\ket\phi. 
 \end{equation}
 Then, the total number of gates is
 \begin{equation}
 	\tilde O\left(\frac{\Lambda}{\chi^3\xi'} + \frac{\sqrt L}{\chi}(2^{k_1}\Lambda+\Phi)\right)
 	= \tilde O\left(\frac{\Lambda}{\chi^3\Delta^\kappa} + \frac{\Lambda}{\chi \Delta^{(3-\kappa)/2} }+\frac{\Phi}{\chi\Delta^{(1-\kappa)/2}}\right).
 \end{equation}
 This proves part (i). Part (ii) follows from the same argument as Corollary~\ref{cor:filteringEstimate}.
 \end{proof}
 Note that choosing $\kappa=1$ gives the optimal inverse scaling in $\Delta$ and $\xi$, respectively, for this algorithm. Similarly as in Section~\ref{sec:unknown} however, other values of $\kappa$ can be chosen to take the other parameters into account. 

\section{Chebyshev method}\label{app:Chebyshev}

We now show that the alternative approach of \cite{ChildsKothariSomma15} of using quantum walks and Chebyshev polynomials can be used to obtain an algorithm with essentially the same runtime. We assume in this section that $\tilde H$ has at most $d=O(\log N)$ non-zero entries in each row/column\footnote{Notice that this includes many-body Hamiltonians, as Hamiltonians consisting of $n$ terms acting on at most $k$ qubits are sparse with $d=2^kn$.}. We also assume that the spectrum of $\tilde H$ is contained in $[0, 1/2]$ for simplicity\footnote{In fact, it is sufficient to assume that the spectrum of $\tilde H$ is contained in $[0,1-\tau]$, where $\tau$ is defined in \eqref{eq:choicetauCheby}. This ensures that $H$, as defined below, has entries with modulus at most $1$}. 
Moreover, as in previous work \cite{BerryChildsKothari15,ChildsKothariSomma15,LowChuang16}, we assume that we are given quantum oracle access to the positions and values of the non-zero elements of $\tilde H$. Specifically, we assume that we are given unitaries  $\mathcal O_1, \mathcal O_2$ such that $\mathcal O_1\ket{j,l}=\ket{j,\nu(j,l)}$ and $\mathcal O_2\ket{j,k,z} = \ket{j,k,z\oplus \tilde H_{jk}}$, where $\nu(j,l)$ is the column index of the $l^{\text{th}}$ nonzero entry in the $j^{\text{th}}$ row of $\tilde H$, and the third register on which $\mathcal O_2$ acts encodes a bit string representation of an entry of $\tilde H$. In this section, let $\Lambda$ denote the gate complexity of the oracles. 

Suppose that the value of $\lambda_0$ is known to a precision of $\delta=O\left( \Delta/\log\frac1{\chi\epsilon} \right)$. 
Let $ E$ be a known real number such that $0\leq  E\leq \lambda_0$ and $\delta_E := \lambda_0-E<\delta$. Define 
 $H:=\tilde H - (E-\tau)\id$. Then $\ket{\lambda_0}$ is the unique eigenvector of $H$ with minimum eigenvalue $\tau+\delta_E$ and by assumption,  all other eigenvalues of $H$  are $\geq \tau+\delta_E+\Delta$. 
This method is based on the observation that a high power of $\id-(H/d)^2$ is approximately proportional to a projector onto $\ket{\lambda_0}$. More precisely, for any trial state $\ket\phi= \phi_0\ket{\lambda_0}+\ket{\lambda_0^\perp}$, 
\begin{equation} \label{eq:HMphi}
\left(\id - \left(\frac Hd\right)^2\right)^M\ket\phi = \phi_0\left(1-\left(\frac{\tau+\delta_E}d\right)^2\right)^M \left( \ket{\lambda_0} + \frac1{\phi_0} \left(\frac{\id - (H/d)^2}{1-((\tau+\delta_E)/d)^2}\right)^M  \ket{\lambda_0^\perp}\right).
\end{equation}
The norm of the second term in the brackets is bounded by $|\phi_0|^{-1} e^{-\Omega(M(\tau+\delta_E)\Delta)}$. 
Indeed, for small $\Delta$,
\begin{align}
	\left\|\left(\id - \left(\frac Hd\right)^2\right)^M \ket{\lambda_0^\perp}\right\| &<
	\left( \frac{1-\left(\frac{\Delta+\tau+\delta_E}d\right)^2}{1-\left(\frac{\tau+\delta_E}d\right)^2} \right)^M \\
	&= \left(1 - \frac{2\Delta(\tau+\delta_E) + \Delta^2}{d^2\left(1-\left(\frac{\tau+\delta_E}d\right)^2\right) }\right)^M \\
	&= e^{-\Omega(M(\tau+\delta_E)\Delta/d^2)}
\end{align}
Thus, 
\begin{equation} \label{eq:HMFullApprox}
	\left\| \frac{\left(\id - \left( H/d\right)^2\right)^M \ket\phi}{\left\|\left(\id - \left( H/d\right)^2\right)^M\ket\phi\right\| } - \ket{\lambda_0} \right\| =O(\epsilon),
\end{equation}
provided that 
\begin{equation}\label{eq:choiceMCheby}
M=\Omega\left(\frac{d^2}{\Delta(\tau+\delta_E)}\log\frac1{|\phi_0|\epsilon}\right).
\end{equation}
On the other hand, 
\begin{align}
	\left(1-\left(\frac{\tau+\delta_E}d\right)^2\right)^M = e^{-O( (\tau+\delta_E)^2M/d^2)}. 
\end{align}
Thus,
\begin{equation} \label{eq:chebyNorm}
\left\|\left(\id - \left(\frac Hd\right)^2\right)^M \ket\phi\right\| = \Omega(|\phi_0|),
\end{equation}
provided that $\tau+\delta_E = O(d/\sqrt M)$. Hence, since by assumption $\delta_E < \delta$, choosing
\begin{equation} \label{eq:choicetauCheby}
	\tau = \Theta\left( \frac{\Delta}{\log\frac1{\chi\varepsilon} }\right) 
\end{equation}
and 
\begin{equation}
 M = \Theta\left( \frac{d^2}{\Delta^2} \log^2\frac1{\chi\epsilon}\right)
\end{equation}
satisfies both \eqref{eq:HMFullApprox} and \eqref{eq:chebyNorm}. 

Our aim in the following is to prepare $(\id-(H/d)^2)^M\ket\phi$. The strategy we employ is as follows. First, we approximate $(\id-(H/d)^2)^M$ as a linear combination of low order Chebyshev polynomials in $H/d$. Second, we implement this linear combination with some amplitude using quantum walks and the non-unitary LCU Lemma \cite[Lemma 7]{ChildsKothariSomma15}. Third, we use amplitude amplification or fixed point search to boost the overlap with the target state.

In the following, assume for simplicity that $M=2m$ is even (the algorithm can also be adapted to odd $M$ with minor modifications). Let $\T_k(x)$ and $\U_k(x)$ be the $k$\textsuperscript{th} Chebyshev polynomials of the first and second kind, respectively. It is well-known that
\begin{equation}
(1-x^2)^M = \sum_{k=0}^M \alpha_k \T_{2k}(x),
\end{equation}
where
\begin{equation}
	\alpha_k = 
	\begin{cases}
		2^{1-2M}\binom{2M}{M} - \frac{(2M-1)!!}{2^MM!}, & k=0\\
		 (-1)^k 2^{1-2M}\binom{2M}{M+k}, & k\geq 1.
	\end{cases}
\end{equation}
Since $\|H\|\leq 1$ and $|\T_k(x)| \leq 1$ for $|x|\leq 1$, \eqref{eq:chernoff} implies that 
\begin{equation}
\left(\id-\left(\frac Hd\right)^2\right)^{2m} =  \sum_{k=0}^{m_0} \alpha_k \T_{2k}\left(\frac Hd\right)+O(\chi\epsilon), \label{eq:truncCheby}
\end{equation}
provided that  
\begin{equation} \label{eq:choicem0}
m_0 = \Theta\left(\sqrt{M\log\frac1{\chi\epsilon}}\right)
=  \Theta\left(\frac d{ \Delta}\log^{3/2}\frac1{\chi\epsilon}\right).
\end{equation}

Next, recall that $\T_k(H/d)$ can be implemented with some amplitude using quantum walks (c.f. \cite[Section 4.1]{ChildsKothariSomma15}) on a larger Hilbert space, which is obtained by first adding an ancilla qubit and then doubling the entire system: For $j\in[N]:=\{1,\ldots,N\}$, define $\ket{\psi_j}\in \cc^{2N}\otimes\cc^{2N}$ as
\begin{equation}
\ket{\psi_j} := \ket {j0} \otimes \frac{1}{\sqrt d}\sum_{\substack{l\in [N]\\ H_{jl}\neq 0}} \ket l \left( \sqrt{H_{jl}^*}\ket 0 + \sqrt{1- |H_{jl}|}\ket 1\right) 
\end{equation}
and
\begin{equation}
T := \sum_{j\in [N]} \ketbra{\psi_j} j. 
\end{equation}
Note that  the entries of $H$ have modulus at most $1$. 
Let $S$ be the swap operator on $\cc^{2N}\otimes\cc^{2N}$, i.e. $S\ket{jb_1}\ket{lb_2}=\ket{lb_2}\ket{jb_1}$, and $W=S(2TT^\dagger - \id)$. By \cite[Lemma 16]{ChildsKothariSomma15}, within the invariant subspace $\vspan\{T\ket j, ST\ket j : j\in[N]\}$, $W^k$ has the form
\begin{equation}
 W^k = \left( \begin{array}{cc}
 	\T_k(H/d) & -\sqrt{1-(H/d)^2} \U_{k-1}(H/d)\\
 	\sqrt{1-(H/d)^2} \U_{k-1}(H/d) & \T_k(H/d)
 \end{array} \right),
\end{equation}
where the first block corresponds to the space $\vspan\{T\ket j : j\in [N]\}$. Hence, using $k$ steps of the walk, we can implement the transformation 
\begin{equation}
W_k\ket 0^{\otimes q}  \ket\phi = \ket 0^{\otimes q} \T_k(H/d)\ket\phi + \ket{R_k'},
\end{equation}
where 
$q:= \lceil \log_2 N\rceil + 3$ and 
$(\ketbra00^{\otimes q}\otimes \id)\ket{R_k'}=0$. 

To implement the RHS of \eqref{eq:truncCheby}, we employ the non-unitary LCU Lemma: Let $B$ be a circuit on $b=\lceil \log_2(m_0+1)\rceil$ qubits that maps $\ket 0^{\otimes b} $ to 
\begin{equation}
B\ket 0^{\otimes b} := \frac1{\sqrt\alpha} \sum_{k=0}^{m_0} \sqrt{\alpha_k}\ket k,
\end{equation}
where $\alpha = \sum_{k=0}^{m_0} \alpha_k$. Let $U = \sum_{k=0}^{m_0} \ketbra kk \otimes W_{2k}$ be the controlled quantum walk. Then,
\begin{align} 
 (B^\dagger\otimes \id) U(B\otimes\id) \ket0^{\otimes (b+q)}\ket\phi 
   &= \frac1\alpha \ket0^{\otimes b} \sum_{k=0}^{m_0} \alpha_k (\ket 0^{\otimes q}\T_{2k}(H/d)\ket\phi+\ket{R_{2k}'}) + \ket{R'}\\
   &= \frac1\alpha \ket0^{\otimes (b+q)} \sum_{k=0}^{m_0} \alpha_k \T_{2k}(H/d)\ket\phi + \ket{R}, 
\end{align} 
where $(\ketbra00^{\otimes b}\otimes \id)\ket{ R'}=0$ and $(\ketbra00^{\otimes (b+q)}\otimes \id)\ket{ R}=0$.

%%%
The final step of the algorithm is to boost the overlap with amplitude amplification or fixed point search. Note that amplitude amplification can be used without prior knowledge of the overlap \cite{BHMT02}. Alternatively, fixed point search \cite{YoderLowChuang14} can be used for this step. Measuring the ancillas will then project the state onto 
\begin{equation}
\ket{\lambda_0'}:=\frac{\sum_{k=0}^{m_0} \alpha_k \T_{2k}(H/d)\ket\phi}{\|\sum_{k=0}^{m_0} \alpha_k \T_{2k}(H/d)\ket\phi\|},
\end{equation}
provided we successfully obtain $\ket 0^{\otimes (b+q)}$ on the ancillas.  From \eqref{eq:truncCheby},
\begin{equation}
\ket{\lambda_0'} = \frac{(\id-(H/d)^2)^{2m}\ket\phi}{\|(\id-(H/d)^2)^{2m}\ket\phi\|} + O(\epsilon)
\end{equation}
thus, \eqref{eq:HMFullApprox} implies
\begin{equation}
	\ket{\lambda_0'} = \ket{\lambda_0}+O(\epsilon),
\end{equation}
as required. The probability of success is close to $1$, provided that the number of repetitions is 
\begin{equation}
O\left(\frac{\alpha}{\|\sum_{k=0}^{m_0} \alpha_k \T_{2k}(H/d)\ket\phi\|} \right) 
= O(\alpha/|\phi_0|),
\end{equation}
where the last equation follows from \eqref{eq:chebyNorm}. 

%%%
We now calculate the gate count of the entire algorithm. First note that $B$ 
can be implemented with $O(2^b)=O(m_0)$ elementary gates \cite{SBM06}. Next, note that the oracle to $H$ can be obtained from the oracle to $ \tilde H$ with $O(\log M)$ additional gates and qubits. The gate cost to implement $W$ to accuracy $\epsilon'$ is $O(\Lambda+\log M + \log N + \log^{5/2}(1/\epsilon'))$ \cite{BerryChildsKothari15}. 
Here, we require $\epsilon' = O( \epsilon|\phi_0|/m_0d)$. 
Thus, the gate cost of $U$ is $O(m_0(\Lambda+\log M+\log N + \log^{5/2}(m_0d/\epsilon|\phi_0|))$ \cite[Lemma 8]{ChildsKothariSomma15}. Note that $\alpha=O(1)$, and  each iteration of amplitude amplification or fixed point search requires $ O(1)$ uses of $\mathcal C_\phi$, $B$ and $U$. The final gate complexity is thus
\begin{equation}
O\left(\frac{1}{|\phi_0|}\left(m_0\left(\Lambda+\log M+\log N + \log^{5/2}\frac{m_0d}{\epsilon|\phi_0|}\right)+\Phi\right)\right) 
=O\left(\frac\Lambda{|\phi_0| \Delta}\polylog\left(N,\frac1\Delta,\frac1{|\phi_0|\epsilon}\right)+\frac {\Phi}{|\phi_0|}\right).
\end{equation}
The total number of qubits required is $O(\log N + \log M+\log m_0)$. \qed 

It is moreover easy to see that, analogously to Section~\ref{subsec:proofsUnknown}, this approach can also be used for ground state preparation in the case of unknown ground energy, and for estimating the ground energy.

\end{document}